\documentclass[11pt]{amsart}
 
\usepackage{microtype}
\usepackage{enumerate}
\usepackage{hyperref}
\usepackage{amsmath,amssymb,amsthm}
\usepackage{graphicx}

\hypersetup{
    pdftitle=   {Maximum Matching Width},
   pdfauthor=  {Jisu Jeong, Sigve Hortemo S\ae ther, Jan Arne Telle}
}

\newcommand\abs[1]{\lvert #1\rvert}
\newcommand{\compl}[1]{\overline{#1}}
\newcommand{\mm}{\operatorname{mm}}
\newcommand{\bw}{\operatorname{bw}}
\newcommand{\tw}{\operatorname{tw}}
\newcommand{\mmw}{\operatorname{mmw}}
\newcommand{\nat}{\mathbb{N}}

\newtheorem{theorem}{Theorem}[section]
\newtheorem{lemma}[theorem]{Lemma}
\newtheorem{corollary}[theorem]{Corollary}

\begin{document}

\title[Maximum Matching Width]{Maximum matching width: new characterizations and a fast algorithm for dominating set}
\author{Jisu Jeong}
\author{Sigve Hortemo S\ae ther}
\author{Jan Arne Telle}
\address[Jeong]{Department of Mathematical Sciences, KAIST, South Korea}
\address[S\ae ther, Telle]{Department of Informatics, University of Bergen, Norway}
\email{jjisu@kaist.ac.kr}
\email{sigve.sether@ii.uib.no}
\email{telle@ii.uib.no}
\thanks{The first author is supported by Basic Science Research
  Program through the National Research Foundation of Korea (NRF)
  funded by  the Ministry of Science, ICT \& Future Planning
  (2011-0011653).}
\date{\today}

\begin{abstract}
 We give alternative definitions for maximum matching width, {\sl e.g.} a graph $G$ has $\mmw(G) \leq k$ if and only if it is a subgraph of a chordal graph $H$ and for every maximal clique $X$ of $H$
there exists $A,B,C \subseteq X$ with $A \cup B \cup C=X$ and $|A|,|B|,|C| \leq k$
such that any subset of $X$ that is a minimal separator of $H$ is a subset of
either $A, B$ or $C$.  Treewidth and branchwidth have
alternative definitions through intersections of subtrees, where treewidth
focuses on nodes and branchwidth focuses on edges. We show that mm-width
combines both aspects,
focusing on nodes and on edges.
Based on this we prove that given a graph $G$ and a branch decomposition of mm-width $k$ we can solve Dominating Set in time $O^*({8^k})$, thereby beating $O^*(3^{\tw(G)})$ whenever $\tw(G) > \log_3{8} \times k \approx 1.893  k$. Note that $\mmw(G) \leq   \tw(G)+1 \leq 3 \mmw(G)$ and these inequalities are tight. Given only the graph $G$ and using the best known  algorithms to find decompositions, maximum matching width will be better for solving Dominating Set whenever $\tw(G) >  1.549 \times \mmw(G)$.
\end{abstract}
\keywords{FPT algorithms, treewidth, dominating set}
\maketitle

\section{Introduction}

The treewidth $\tw(G)$ and branchwidth $\bw(G)$ of a graph $G$ are connectivity parameters of importance in algorithm design. By dynamic programming along the associated tree decomposition or branch decomposition one can solve many graph optimization problems in time linear in the graph size and exponential in the parameter. 
For any graph $G$, its treewidth and branchwidth is related by $\bw(G) \leq \tw(G)+1 \leq \frac{3}{2}\bw(G)$ \cite{RS1991}.
The two parameters are thus equivalent with respect to fixed parameter tractability (FPT), with a problem being FPT parameterized by treewidth if and only if it is FPT parameterized by branchwidth.
For some of these problems the best known FPT algorithms are optimal, up to some complexity theoretic assumption. For example, Minimum Dominating Set Problem can be solved in time $O^*(3^{\tw(G)})$ when given a decomposition of treewidth $\tw(G)$ \cite{RBR2009} but not in time
$O^*((3-\epsilon)^{\tw(G)})$ for any $\epsilon >0$ unless the Strong Exponential Time Hypothesis (SETH) fails \cite{DDS2011}.

Recently, a graph parameter equivalent to treewidth and branchwidth was introduced, the maximum matching width (or mm-width) $\mmw(G)$, defined by a branch decomposition over the vertex set of the graph $G$, using the symmetric submodular cut function obtained by taking the size of a maximum matching of the bipartite graph crossing the cut (by K{\"o}nig's Theorem equivalent to minimum vertex cover) \cite{Vatshelle2012}. For any graph $G$ we have $\mmw(G) \leq  \bw(G) \leq \tw(G)+1 \leq 3\mmw(G)$ and these inequalities are tight, for example any balanced decomposition tree will show that $\mmw(K_n)=\lceil \frac{n}{3} \rceil$.

In this paper we show that given a branch decomposition over the vertex set of mm-width $k$ we can solve Dominating Set in time $O^*(8^{k})$. This runtime beats the $O^*(3^{\tw(G)})$ algorithm for treewidth \cite{RBR2009} whenever $\tw(G) > \log_3{8} \times k \approx 1.893 k$. 
If we assume only $G$ as input, then 
since mm-width has a submodular cut function \cite{ST2014} we can approximate
mm-width to within a factor $3\mmw(G)+1$ in $O^*(2^{3\mmw(G)})$ time using the
generic algorithm of \cite{OS2004}, giving a total runtime for solving
dominating set of $O^*(2^{9\mmw(G)})$. For treewidth we can in
$O^*(2^{3.7\tw(G)})$ time \cite{AE2010} get an approximation to within a factor
$(3 + 2/3)\tw(G)$ giving a total runtime for solving dominating set of
$O^*(3^{3.666\tw(G)})$\footnote{%
Note that there is also a $O^*(c^{\tw(G)})$ time $3$-approximation of
treewidth~\cite{bodlaender2013c}, but the $c$ is so large that the approximation
alone has a bigger exponential part than the entire Dominating Set algorithm when
using the $3.666$-approximation.}. This implies that on input $G$, using maximum matching width gives better exponential factors whenever $\tw(G) >  1.549\mmw(G)$.

Our results are based on a new characterization of graphs of mm-width at most $k$, as intersection graphs of subtrees of a tree. It can be formulated as follows, encompassing analogous formulations for all three parameters mm-width (respectively treewidth, respectively branchwidth):

\bigskip

For any $k \geq 2$ a graph $G$ on vertices $v_1,v_2,...,v_n$ has 
$\mmw(G) \leq k$ (resp. $\tw(G) \leq k-1$, resp. $\bw(G) \leq k$) 
if and only if there is a tree $T$ of max degree at most $3$ with nontrivial subtrees $T_1,T_2,...,T_n$ such that if $v_iv_j \in E(G)$ then subtrees $T_i$ and $T_j$ have at least one
node (resp. node, resp. edge)
of $T$ in common and for each 
edge (resp. node, resp. edge)
of $T$ there are at most $k$ subtrees using it.

\bigskip 
Thus, while treewidth has a focus on nodes and branchwidth a focus on edges,
mm-width combines the aspects of both.
We also arrive at the following alternative characterization: a graph $G$ has $\mmw(G) \leq k$ if and only if it is a subgraph of a chordal graph $H$ and for every maximal clique $X$ of $H$
there exists $A,B,C \subseteq X$ with $A \cup B \cup C=X$ and $|A|,|B|,|C| \leq k$
such that any subset of $X$ that is a minimal separator of $H$ is a subset of either $A, B$ or $C$.
In fact, 
using techniques introduced by Bodlaender and Kloks in \cite{BK1996} these new characterizations will also allow us to compute a branch decomposition of optimal mm-width in FPT time \cite{JST2015}.
In section 2 we give definitions. In section 3 we define unique minimum vertex covers for any bipartite graph, show some monotonicity properties of these, and use this to give the new characterizations of mm-width. In section 4 we give the dynamic programming algorithm for dominating set. We end in section 5 with some discussions.

\section{Definitions}

For a simple and loopless graph $G=(V,E)$ and its vertex $v$, let $N(v)$ be the set of all vertices adjacent to $v$ in $G$, and $N[v]=N(v)\cup\{v\}$. For a subset $S$ of $V(G)$, let $N(S)$ be the set of all vertices that are not in $S$ but are  adjacent to some vertex of $S$ in $G$, and $N[S]=N(S)\cup S$. 

A \emph{tree decomposition} of a graph $G$ is a pair $(T,\{X_t\}_{t\in V(T)})$ consisting of a tree $T$ and a family $\{X_t\}_{t\in V(T)}$ of vertex sets $X_t\subseteq V(G)$, called \emph{bags}, satisfying the following three conditions:
\begin{enumerate}
\item each vertex of $G$ is in at least one bag,
\item for each edge $uv$ of $G$, there exists a bag that contains both $u$ and $v$, and
\item for vertices $t_1,t_2,t_3$ of $T$, if $t_2$ is on the path from $t_1$ to $t_3$, then $X_{t_1}\cap X_{t_3}\subseteq X_{t_2}$.
\end{enumerate}
The \emph{width} of a tree decomposition $(T,\{X_t\}_{t\in V(T)})$ is $\max_{t\in V(T)} \abs{X_t}-1$.
The \emph{treewidth} of $G$, denoted by $\tw(G)$, is the minimum width over all possible tree decompositions of $G$.

A \emph{branch decomposition} over $X$, for some set of elements $X$, is a pair
$(T, \delta)$, where $T$ is a tree over vertices of degree at most $3$, and
$\delta$ is a bijection from the leaves of $T$ to the elements in $X$.
Any edge $ab$ disconnects $T$ into two subtrees $T_a$ and $T_b$. Likewise, any
edge $ab$ partitions the elements of $X$ into two parts $A$ and $B$, namely
the elements mapped by $\delta$ from the leaves 
of $T_a$, and of $T_b$, respectively. An edge $ab \in E(T)$ is said to \emph{induce} the
partition $(A, B)$. 

A \emph{rooted branch decomposition} is a
branch decomposition $(T, \delta)$ where we subdivide an edge of $T$ and make the new vertex the root $r$.
In a rooted branch decomposition, for an internal vertex $v
\in V(T)$, we denote by $\delta(v)$ the union of $\delta(l)$ for all leaves of
$l$ having $v$ as its ancestor.

Given a symmetric ($f(A) = f( \compl A)$) function $f : 2^{X} \to \mathbb{R}$,
using branch decompositions over $X$, we get a nice way of defining width
parameters: For a branch decomposition $(T, \delta)$ and edge $e \in T$, we
define the \emph{$f$-value} of the edge $e$ to be the value $f(A) = f(B)$ where
$A$ and $B$ are the two parts of the partition induced by $e$ in $(T, \delta)$,
denoted $f(e)$. We define the \emph{$f$-width} of branch decomposition $(T,
\delta)$ to be the maximum $f$-value over all edges of $T$, denoted $f(T,
\delta)$: $\max_{e \in T}\{f\text{-value of } e\}$. For set $X$ of elements, we
define the \emph{$f$-width} of $X$ to be the minimum $f$-width over all branch
decompositions over $X$.  If $\lvert X\rvert\le 1$, then $X$ admits no branch
decomposition and we define its $f$-width to be $f(\emptyset)$.

For a graph $G$ and a subset $S\subseteq E(G)$, 
the \emph{branchwidth} $\bw(G)$ of $G$ is 
the $f$-width of $E(G)$
where $f: 2^{E(G)} \to \mathbb{R}$ is a function 
such that 
$f(S)$ is the number of vertices that are incident with an edge in $S$ as well as
an edge in $E(G)\setminus S$.

The \emph{Maximum Matching-width} of a graph $G$, mm-width in short, is a width parameter
defined through branch decompositions over $V(G)$ and the cardinality of
matchings. For a subset $S \subseteq V(G)$, the Maximum Matching-value is defined
to be the size of a maximum matching in $G[S, V(G) \setminus S]$, denoted
$\mm(S)$. The mm-width of a graph $G$, denoted $\mmw(G)$, is the $f$-width of
$V(G)$ for $f = \mm$.

\section{Subtrees of a tree representation for mm-width}

\subsection{K\"onig covers}
In this subsection, we will define canonical minimum vertex covers for any bipartite graph.
Our starting point is a well-known result in graph theory.
%
    
    \begin{theorem}[K\"onig's Theorem~\cite{Koenig1931}] \label{thm:Koenig}
      Given a bipartite graph $G$, for any maximum matching $M$ and minimum
      vertex cover $C$ of $G$, the number of edges in $M$ is the same as the
      number of vertices in $C$; $|M| = |C|$.
    \end{theorem}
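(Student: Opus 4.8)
The plan is to establish the two inequalities $\abs{M}\le\abs{C}$ and $\abs{C}\le\abs{M}$ separately; for the second it will be enough to exhibit \emph{one} vertex cover of size $\abs{M}$, since then in particular a minimum vertex cover has size at most $\abs{M}$. The first inequality is immediate: the edges of a matching are pairwise vertex-disjoint, so any vertex cover must use a distinct vertex to cover each of the $\abs{M}$ edges of $M$, whence $\abs{C}\ge\abs{M}$ for every vertex cover $C$.

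For the reverse inequality I would use the standard alternating-paths construction. Write the bipartition of $G$ as $(X,Y)$, fix a maximum matching $M$, and let $U\subseteq X$ be the set of $M$-unsaturated vertices of $X$. Let $Z$ be the set of all vertices reachable from $U$ by $M$-alternating walks (alternating between non-matching and matching edges, starting along a non-matching edge out of $X$), and put $S=X\cap Z$, $T=Y\cap Z$, and $K=(X\setminus S)\cup T$. The claim to prove is that $K$ is a vertex cover of $G$ with $\abs{K}\le\abs{M}$. Granting this and letting $C$ be a minimum vertex cover, the easy direction gives $\abs{M}\le\abs{C}$, while minimality of $C$ gives $\abs{C}\le\abs{K}\le\abs{M}$; hence $\abs{M}=\abs{C}=\abs{K}$, as desired.

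To verify the claim I would check four points. (i) $K$ covers every edge: an edge incident to a vertex of $S$ must have its $Y$-endpoint in $T$, since otherwise that endpoint would be reachable and hence lie in $Z$; every other edge is incident to $X\setminus S$. (ii) Every vertex of $T$ is $M$-saturated, and its matching edge goes to $S$: a vertex of $T$ is reached by an alternating walk ending with a non-matching edge, so it must be matched --- otherwise that walk is an $M$-augmenting path, contradicting maximality of $M$ --- and its matching edge continues the walk back into $Z\cap X=S$. (iii) Every vertex of $X\setminus S$ is $M$-saturated, since the $M$-unsaturated vertices of $X$ all lie in $U\subseteq S$. (iv) The matching edges incident to $X\setminus S$ and those incident to $T$ form disjoint families, because by (ii) the partners of $T$-vertices lie in $S$, not in $X\setminus S$. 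From (ii)--(iv), $M$ saturates every vertex of $K$ and does so with $\abs{K}$ pairwise distinct edges, hence $\abs{K}\le\abs{M}$.

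The main obstacle is the reverse inequality, and its crux is step (ii): one must invoke the maximality of $M$ in exactly the right place to rule out $M$-augmenting paths and conclude that every reachable vertex of $Y$ is matched back into $S$. The rest is routine case analysis on which side of the bipartition an edge or a matched vertex lies. Alternative proofs I could give --- via max-flow/min-cut on the natural unit-capacity $s$--$t$ network obtained from $(X,Y)$, or via LP duality using total unimodularity of the bipartite incidence matrix --- are arguably shorter to cite but less self-contained, so I would favour the alternating-paths argument above.
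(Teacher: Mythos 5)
Your proof is correct, and it is essentially the approach the paper itself relies on: the vertex cover $K=(X\setminus S)\cup T$ you construct from alternating paths is exactly the $A$-K\"onig cover the paper defines right after the theorem statement (the paper does not verify the construction itself but defers to Diestel). The only cosmetic point is that in step (ii) you should pass from an alternating \emph{walk} to an alternating \emph{path} (e.g.\ take a shortest one) before invoking maximality of $M$ to rule out augmentation.
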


%
%
Let $(A,B)$ be the vertex partition of $G$.
This statement can be proved in multiple ways. 
The harder
direction, that a maximum matching is never smaller than a minimum vertex cover,
does not hold for general graphs, and is usually proven by taking a maximum matching $M$ and constructing a vertex
cover $C$ having size exactly $|M|$, as follows:
    \begin{quote}
      For each edge $ab \in M$ (where $a \in A$, and $b \in B$), if $ab$ is part
      of an alternating path starting in an unsaturated vertex of $A$, then put
      $b$ into $C$, otherwise put $a$ into $C$.
    \end{quote}
For a proof that $C$ indeed is a minimum vertex cover of $G$,
see e.g.~\cite{Diestel2010}.  We will call the vertex cover $C$ constructed by the
above procedure the $A$-\emph{K\"onig cover} of $G$. A $B$-K\"onig cover of $G$
is constructed similarly by changing the roles of $A$ and $B$ (see
Figure~\ref{fig:mm:kvc}).
%
    \begin{figure}[ht!]
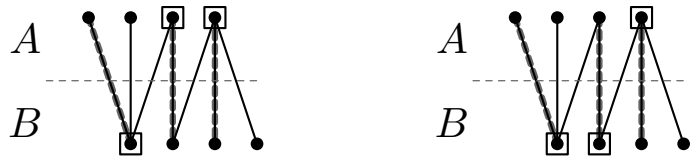

      \center
      \includegraphics[scale=0.8]{KVC_1}
      \hspace{2cm}
      \includegraphics[scale=0.8]{KVC_2}
      \caption{$A$-K\"onig cover and $B$-K\"onig cover.}
      \label{fig:mm:kvc}
    \end{figure}
%
Lemma~\ref{chp:mm:lemma:unique_koenig} below shows that the $A$-K\"onig cover will, on the $A$-side consist of the $A$-vertices in the union over all minimum vertex covers, and on the $B$-side will consist of the $B$-vertices in the intersection over all minimum vertex covers.

    \begin{lemma}\label{chp:mm:lemma:unique_koenig}
      For a bipartite graph $G = (A \cup B, E)$ and minimum vertex cover $C$ of
      $G$, the set $C$ is the $A$-K\"onig cover of $G$ if and only if for any
      minimum vertex cover $C'$ of $G$ we have $A \cap C' \subseteq A \cap C$,
      and $B \cap C' \supseteq B \cap C$.
    \end{lemma}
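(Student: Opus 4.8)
The plan is to prove both directions by analyzing the structure of minimum vertex covers of a bipartite graph via alternating paths from unsaturated $A$-vertices, exactly as in the construction of the $A$-K\"onig cover. Fix a maximum matching $M$, let $C$ be the $A$-K\"onig cover built from $M$, and let $C'$ be an arbitrary minimum vertex cover. First I would establish the ``only if'' direction: assuming $C$ is the $A$-K\"onig cover, show $A \cap C' \subseteq A \cap C$ and $B \cap C' \supseteq B \cap C$. The key observation is that both $C$ and $C'$ have size $|M|$ (K\"onig's Theorem), and each must contain exactly one endpoint of every edge of $M$, since $M$ has $|M|$ edges, all vertex-disjoint, each needing to be covered, and there is no room to cover two endpoints of one matching edge while still covering all the others. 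So for each edge $ab \in M$ the question is only whether $C$ (resp.\ $C'$) picks the $A$-endpoint or the $B$-endpoint. I claim: if $C$ picks $a$ (i.e.\ $ab$ is \emph{not} on an alternating path from an unsaturated $A$-vertex), then $C'$ must also pick $a$, not $b$. Suppose for contradiction $C'$ picks $b$. Consider the edges of $M$ whose $B$-endpoint lies in $C'$; I would argue that following non-matching edges out of $b$ and matching edges back, one can trace an alternating path, and because $C'$ is a cover the path cannot ``escape'' except by reaching an unsaturated $A$-vertex — contradicting that $ab$ is not on such a path. Symmetrically, if $C$ picks $b$, I want to show $C'$ cannot be forced to pick $a$, which actually follows trivially: we only need $B \cap C' \supseteq B \cap C$, i.e.\ every $b \in B \cap C$ also lies in $C'$, and $b \in B \cap C$ means $ab$ \emph{is} on an alternating path from an unsaturated $A$-vertex; a short alternating-path argument then shows $C'$ is forced to pick $b$ over $a$ as well. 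Combining these two claims over all edges of $M$, together with the fact that unsaturated vertices lie in no minimum vertex cover, gives the two inclusions.

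For the ``if'' direction: suppose $C$ is a minimum vertex cover with $A \cap C' \subseteq A \cap C$ and $B \cap C' \supseteq B \cap C$ for every minimum vertex cover $C'$. In particular these inclusions hold when $C'$ is taken to be the $A$-K\"onig cover $D$ of $G$ (built from any maximum matching). But the ``only if'' direction, applied to $D$, tells us $A \cap C \subseteq A \cap D$ and $B \cap C \supseteq B \cap D$. Chaining the two sets of inclusions yields $A \cap C = A \cap D$ and $B \cap C = B \cap D$, and since $C$ and $D$ are both subsets of $A \cup B$, we get $C = D$, so $C$ is the $A$-K\"onig cover. This direction is essentially free once ``only if'' is in hand.

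The main obstacle is the central alternating-path argument in the ``only if'' direction — precisely, showing that if $ab \in M$ is not on any $M$-alternating path starting at an unsaturated $A$-vertex, then no minimum vertex cover can cover $ab$ by its $B$-endpoint. The cleanest route is probably to characterize $A \cap C$ (the $A$-side of the $A$-K\"onig cover) intrinsically as the set of $A$-vertices reachable from some unsaturated $A$-vertex by an $M$-alternating path, and dually $B \cap C$ as the $B$-vertices \emph{not} so reachable, and then argue that for \emph{any} minimum vertex cover $C'$: every reachable $A$-vertex that is matched lies outside $C'$ (else one gets an augmenting path or an undercovered edge), forcing its matched $B$-partner into $C'$; propagating this along alternating paths pins down $C'$ on exactly the reachable component. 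I would keep a close eye on unsaturated vertices on both sides and on the fact that an unsaturated $A$-vertex has all its neighbours in $B \cap C'$ for any cover $C'$, which seeds the induction.
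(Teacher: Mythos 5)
Your overall architecture matches the paper's: fix a maximum matching $M$, build the $A$-K\"onig cover $C^*$ from it, prove the two inclusions for an arbitrary minimum vertex cover $C'$ against $C^*$, and obtain both directions of the equivalence by chaining inclusions. Your observation that every minimum vertex cover consists of exactly one endpoint of each edge of $M$ (hence contains no unsaturated vertex) is correct, and it does legitimately reduce both inclusions to a single statement about which endpoint of each matching edge gets picked. However, the claim you single out as ``the main obstacle'' --- that if $ab\in M$ is \emph{not} on an $M$-alternating path starting at an unsaturated $A$-vertex then no minimum vertex cover covers $ab$ by its $B$-endpoint --- is false. Take $G$ to be a single edge $ab$ (or a $C_4$ with a perfect matching): there are no unsaturated $A$-vertices, so no edge of $M$ lies on such a path and the $A$-K\"onig cover is all of $A$; yet $B$ is also a minimum vertex cover, covering every matching edge by its $B$-endpoint. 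Your sketched argument (``the path cannot escape except by reaching an unsaturated $A$-vertex'') has no force, since the walk can simply terminate at a saturated vertex. The lemma survives because it asserts only $A\cap C'\subseteq A\cap C$, not equality; in the example $A\cap C'=\emptyset$.

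The claim you actually need is the one you dismiss as following ``trivially'': if $ab\in M$ \emph{is} on an alternating path $P$ from an unsaturated $A$-vertex $u$ to $b$, then \emph{every} minimum vertex cover $C'$ contains $b$. This is the real content of the lemma, and you never supply the argument. The paper proves it by counting: $P$ has odd length $2k+1$, so $C'$ must contain at least $k+1$ vertices of $V(P)$ (each vertex covers at most two edges of $P$), while the $2k+1$ vertices of $V(P)\setminus\{b\}$ meet only $k$ edges of $M$ and hence contribute at most $k$ vertices to $C'$, since $C'$ consists of exactly one endpoint per matching edge; therefore $b\in C'$. With that in place, your reduction yields $B\cap C^*\subseteq B\cap C'$ and hence $A\cap C'\subseteq A\cap C^*$, and your chaining argument for the converse direction is fine. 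Two smaller points: the paper instead derives the $A$-side inclusion by a separate argument about the edges left uncovered by $B\cap C^*$ versus $B\cap C'$, but your route through the matching-edge dichotomy is also valid; and in your final paragraph the reachability characterization is inverted --- $A\cap C^*$ consists of the saturated $A$-vertices \emph{not} reachable from an unsaturated $A$-vertex by an alternating path, and $B\cap C^*$ of the $B$-vertices that \emph{are} reachable.
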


\begin{proof} 
  Let $M$ be a maximum matching of $G$, and $C^*$ the $A$-K\"onig cover of $G$
  constructed from $M$. 
Since both $C^*$ and $C$ are minimum vertex covers, by showing that for
  any minimum vertex cover $C'$ of $G$ we have $A \cap C' \subseteq A \cap C^*$,
  and $B \cap C' \supseteq B \cap C^*$, as a consequence will also show that $C'
  = C^*$ if and only if for all minimum vertex covers $C'$ of $G$ we have $A
  \cap C' \subseteq A \cap C$ and $B \cap C' \supseteq B \cap C$.  So this is
  precisely what we will do.

  Let $C'$ be any minimum vertex cover, and $b$ any vertex in $C^* \cap B$.  We
  will show that $b \in C'$, and from that conclude $B \cap C' \supseteq B \cap
  C^*$.  As $b \in C^*$ there must be some alternating path from $b$ to an
  unsaturated vertex $u \in A$.  The vertices $b$ and $u$ are on different sides
  of the bipartite graph, so the alternating path $P$ between $u$ and $b$ must
  be of some odd length $2k+1$. From Theorem~\ref{thm:Koenig}, we deduce
  that one and only one endpoint of each edge in $M$ must be in $C'$.  As each
  vertex in $V(P)$ is incident with at most two edges of $P$, and all edges of
  $P$ must be covered by $C'$, we need at least $\lceil (2k + 1) / 2 \rceil  = k
  + 1$ of the vertices in $V(P)$ to be in $C'$. However, the vertices of $V(P) -
  b$ are incident with only $k$ edges of $M$. Therefore at most $k$ of the
  vertices $V(P) - b$ can be in $C'$.  In order to have at least $k+1$ vertices
  from $V(P)$ in $C'$ we thus must have $b \in C'$.

  We now show that $C' \cap A \subseteq C^*$ by showing that $a \in C^*$ if $a
  \in A \cap C'$.  Let $E^*$ and $E'$ be the edges of $G$ not covered by $C^*
  \cap B$ and $C' \cap B$, respectively. Since $C^* \cap B \subseteq C' \cap B$,
  the set $E^*$ must contain all the edges of $E'$.  As $C'$ is a minimum vertex
  cover, and all edges other than $E'$ are covered by $C' \cap B$, a vertex $a$
  of $A$ is in $C'$ only if it covers an edge $e \in E'$. As $E' \subseteq E^*$,
  we have $e \in E^*$, and hence $C^*$ must also cover $e$ by a vertex in $A$.
  As $G$ is bipartite, the only vertex from $A$ that covers $e$ is $a$, and we
  can conclude that $a \in C^*$.  
\end{proof} 

The following lemma establishes an important monotonicity property for $A$-K\"onig covers.

     \begin{lemma} \label{mm:lemma:legal}
        Given a graph $G$ and tripartition $(A,B,X)$ of the vertices $V(G)$, the
        following two properties holds for the $A$-K\"onig cover $C_A$ of
        $G[A,B \cup X]$ and any minimum vertex cover $C$ of $G[A \cup X, B]$.
        \begin{enumerate}
          \item $A \cap C \subseteq A \cap C_A$
          \item $B \cap C \supseteq B \cap C_A$.
        \end{enumerate}
     \end{lemma}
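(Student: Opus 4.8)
The statement generalizes Lemma~\ref{chp:mm:lemma:unique_koenig}, which is the case $X=\emptyset$ (there $G[A,B\cup X]=G[A\cup X,B]=G[A,B]$). A direct imitation of that proof runs into trouble, because an $M$-alternating path in $G[A,B\cup X]$ may repeatedly detour through $X$, using $A$--$X$ edges that are absent from $G[A\cup X,B]$; so the plan is to argue instead via maximum matchings of \emph{both} graphs at once. First reduce to a single cover: by Lemma~\ref{chp:mm:lemma:unique_koenig} applied to $G[A\cup X,B]$ with parts $(A\cup X,B)$, the $(A\cup X)$-K\"onig cover $C_{AX}$ of $G[A\cup X,B]$ has the largest $(A\cup X)$-side and the smallest $B$-side among all minimum vertex covers, so $A\cap C\subseteq A\cap C_{AX}$ and $B\cap C\supseteq B\cap C_{AX}$ for every minimum vertex cover $C$ of $G[A\cup X,B]$; hence it suffices to prove (1) and (2) for $C=C_{AX}$.

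Write $H=G[A,B\cup X]$ and $H'=G[A\cup X,B]$, and use the alternating-path description of K\"onig covers: fixing a maximum matching $M$ of $H$, $C_A=(A\setminus Z)\cup((B\cup X)\cap Z)$, where $Z$ is the set of vertices of $H$ reachable from an $M$-unsaturated vertex of $A$ by an $M$-alternating path; likewise $C_{AX}=((A\cup X)\setminus Z')\cup(B\cap Z')$ for a maximum matching $N$ of $H'$. By K\"onig's Theorem these are minimum vertex covers, and by Lemma~\ref{chp:mm:lemma:unique_koenig} the covers, hence the sets $Z$ and $Z'$, do not depend on the chosen maximum matchings. A short computation shows that (1) and (2) for $C=C_{AX}$ are together equivalent to the single containment $Z\cap(A\cup B)\subseteq Z'$; and since membership in $Z$ (resp.\ $Z'$) records exactly whether a vertex lies in no, or in every, minimum vertex cover of $H$ (resp.\ $H'$)—equivalently whether it is inessential, or essential, for maximum matchings—this unwinds to two implications, writing $\nu$ for matching number: for $w\in A$, $\nu(H-w)=\nu(H)\Rightarrow\nu(H'-w)=\nu(H')$; and for $w\in B$, $\nu(H-w)=\nu(H)-1\Rightarrow\nu(H'-w)=\nu(H')-1$.

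These two matching implications are proved by the symmetric-difference trick. Negating the first: some maximum matching $M$ of $H$ misses $w\in A$, yet every maximum matching $N$ of $H'$ saturates $w$; negating the second: $w\in B$ is saturated by every maximum matching $M$ of $H$, yet some maximum matching $N$ of $H'$ misses $w$. In both cases consider the connected component $Q$ of $w$ in $M\triangle N$. The key point—which is also the main obstacle, since it is exactly where the asymmetry between $H$ and $H'$ could bite—is that $Q$ never visits a vertex of $X$: starting at $w\in A\cup B$, one must leave an $A$-vertex along an $N$-edge (which in $H'$ goes into $B$) and leave a $B$-vertex along an $M$-edge (which in $H$ goes into $A$), so $Q$ alternates strictly between $A$ and $B$, and every edge of $Q$ is an $A$--$B$ edge of $G$ together with being in $M$ or in $N$, hence an edge of both $H$ and $H'$. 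Now a parity analysis of $|Q|$ finishes it: either $Q$ is an augmenting path for $M$ in $H$, or for $N$ in $H'$, contradicting maximality; or toggling $M$ (resp.\ $N$) along $Q$ yields a maximum matching of $H$ (resp.\ $H'$) avoiding $w$, contradicting the essentiality assumption. Every case ends in a contradiction, which proves $Z\cap(A\cup B)\subseteq Z'$ and hence the lemma.
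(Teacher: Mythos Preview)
Your reduction to $C=C_{AX}$ via Lemma~\ref{chp:mm:lemma:unique_koenig} is fine, and the reformulation $(A\cup B)\cap Z\subseteq Z'$ is correct. The $A$-side is also correct: for $a\in A$ one genuinely has $a\in Z\Leftrightarrow\nu(H-a)=\nu(H)$, and your symmetric-difference path stays in $A\cup B$ and gives the desired contradiction.

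The gap is on the $B$-side. You assert that for $b\in B$, membership in $Z$ is \emph{equivalent} to being essential for maximum matchings, i.e.\ to $\nu(H-b)=\nu(H)-1$. This is false. By construction $b\in Z$ means $b\in C_A$, and by Lemma~\ref{chp:mm:lemma:unique_koenig} this says $b$ lies in \emph{every} minimum vertex cover of $H$; whereas $\nu(H-b)=\nu(H)-1$ (equivalently, $b$ saturated by all maximum matchings) only says $b$ lies in \emph{some} minimum vertex cover. Take $H=K_{2,2}$: every $b\in B$ is essential, yet there are no unsaturated $A$-vertices, so $Z=\emptyset$ and no $b$ is in $Z$. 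Hence your ``unwinding'' of $B\cap Z\subseteq Z'$ to the implication $\nu(H-w)=\nu(H)-1\Rightarrow\nu(H'-w)=\nu(H')-1$ is not the statement you need.

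Consequently, even though your symmetric-difference argument does correctly prove ``$b$ essential in $H\Rightarrow b$ essential in $H'$'', this does not close the loop: from $b\in Z$ you only get $b$ essential in $H$, hence essential in $H'$, and you cannot conclude $b\in Z'$ (i.e.\ that $b$ is in every minimum vertex cover of $H'$). The hypothesis you would need to contradict is merely ``some minimum vertex cover of $H'$ avoids $b$'', which does \emph{not} hand you a maximum matching $N$ of $H'$ missing $b$; so the component-of-$w$ analysis in $M\triangle N$ no longer applies as written. The paper sidesteps this by proving the case $|X|=1$ directly (introducing an intermediate $A$-K\"onig cover of $G[A,B]$ and splitting into four containments) and then iterating; if you want to keep your global approach, you would need a different argument for the $B$-side that works with the correct characterisation ``$b$ in every minimum vertex cover'' rather than with essentiality.
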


\begin{proof} 
 To prove this, we will show that it holds for $X = \{x\}$, and then by
 transitivity of the subset relation and that a K\"onig cover is also a minimum
 vertex cover, it must hold also when $X$ is any subset
 of $V(G)$. 

 Let $A' = A + x$ and $B' = B + x$, and let $C'$ be the $A$-K\"onig cover of the graph $G[A, B]$ (be aware that this
 graph has one less vertex than $G$). 
 We will break the proof into four parts, namely
       $A \cap C \subseteq A \cap C' $, 
       $A \cap C'  \subseteq A \cap C_A$,
       $B \cap C_A \subseteq B \cap C' $, and
       $B \cap C'  \subseteq B \cap C$.
 Again, by transitivity of the subset relation, this will be sufficient for our
 proof. We now look at each part separately.
 
 \underline{$A \cap C \subseteq A \cap C'$}: 
 Two cases: $|C|=|C'|$ and $|C| > |C'|$. We do the latter first.
 This means that $C' \cup \{x\}$ must be a minimum vertex
 cover of $G[A', B]$. Therefore the $A'$-K\"onig cover $C^*$ of $G[A',B']$ must
 contain $(C' \cup \{x\}) \cap A'$. This means that $C^*$ is a minimum vertex
 cover of $G[A,B]$, and by $C'$ being the $A$-K\"onig cover of $G[A,B]$, we have
 from Lemma~\ref{chp:mm:lemma:unique_koenig} that $C' \cap A \supseteq C^* \cap
 A$. And since $C^*$ is a $A'$-K\"onig cover of $G[A',B]$ we have $C' \cap A'
 \supseteq C \cap A'$ and can conclude that $C' \cap A \supseteq A \cap C$.
 Now assume that the two vertex covers are of equal size. Clearly $x \not\in
 C$, as then $C - x$ is a smaller vertex cover of $G[A,B]$ than $C'$, so $x$
 is not in $C$. This means that $C$ is a minimum vertex cover of $G[A,B]$,
 so all vertices in $A \cap C$ must be in $C'$ by Lemma~\ref{chp:mm:lemma:unique_koenig}.

 \underline{$A \cap C'  \subseteq A \cap C_A$}:
 Suppose $C'$ is smaller than $C_A$. This means $C' + x$ is a minimum vertex
 cover of $G[A,B']$, and hence $(C' + x) \cap A \subseteq C_A \cap A$ by Lemma~\ref{chp:mm:lemma:unique_koenig}. On the other hand, if $C'$ is of the
 same size as $C_A$. Then $C_A$ is a minimum vertex cover of $G[A, B]$, and so $x
 \not\in C_A$.  This means $C_A \cap N(x) \cap A \subseteq C_A \cap A$. And as $C_A$ is
 a minimum vertex cover of $G[A,B]$, we know from
 Lemma~\ref{chp:mm:lemma:unique_koenig} that $C_A \cap N(x) \cap A \subseteq C'$. In
 particular, this means $C'$ covers all the edges of $G[A,B']$ not in $G[A,B]$,
 which means that $C'$ is also a minimum vertex cover of $G[A,B']$. This latter
 observation means that $C' \cap A \subseteq C_A \cap A$ from Lemma~\ref{chp:mm:lemma:unique_koenig}.  
 
 \underline{$B \cap C_A \subseteq B \cap C' $}:
 Suppose $C'$ is smaller than $C_A$. This means $C' + x$ is a minimum vertex
 cover of $G[A,B']$, and thus $B' \cap (C' + x) \supseteq B' \cap C_A$. Which
 implies that $B \cap C' \supseteq B \cap C_A$.
 Now assume that $C'$ is of the same size as $C_A$. This means $C_A$ is a
 minimum vertex cover of $G[A,B]$ and $x \not\in C_A$. Furthermore, this means
 $N(x) \cap A \subseteq C_A \cap A \subseteq C' \cap A$ by
 Lemma~\ref{chp:mm:lemma:unique_koenig} and we conclude that $C'$ is a minimum
 vertex cover of $G[A,B']$. By Lemma~\ref{chp:mm:lemma:unique_koenig}, this
 means $B' \cap C_A \subseteq B' \cap C'$ and in particular $B \cap C_A
 \subseteq B \cap C'$.

 \underline{$B \cap C'  \supseteq B \cap C$}:
 Suppose $C'$ is smaller than $C$. This means $C' + x$ is a minimum vertex
 cover of $G[A,B']$, and hence by Lemma~\ref{chp:mm:lemma:unique_koenig} we have 
 $B' \cap (C' + x) \subseteq B' \cap C_2$, which implies 
 $B \cap C' \subseteq B \cap C_2$.
 Now suppose $C'$ is of the same size as $C$. This means that $C$ is a
 minimum vertex cover of $G[A,B]$, and hence we immediately get $C \cap B
 \supseteq C' \cap B$ by Lemma~\ref{chp:mm:lemma:unique_koenig}.
 
 This completes the proof, as we by transitivity of the subset relation have that
 $C_A \cap B \subseteq C \cap B$, and $C \cap A \subseteq C_A \cap A$.
\end{proof} 

We are now ready to prove an important connectedness property of K\"onig covers that arise from cuts of a given branch decomposition.

\begin{lemma} \label{mm:lemma:monotoneunion}
 Given a connected graph $G$ and rooted branch decomposition $(T, \delta)$ over $V(G)$,
 for any node $v$ in $T$, where $\mathcal{C}$ are the descendants of $v$ and 
 $C_u$ means the $\delta(u)$-K\"onig cover of $G[\compl{\delta(u)},
 \delta(u)]$, we have that 
 \[
   \left(\bigcup_{x \in V(T) \setminus \mathcal{C}} C_x \right) \cap 
   \left(\bigcup_{x \in \mathcal{C}} C_x \right) 
   \subseteq C_v \enspace .
 \]
\end{lemma}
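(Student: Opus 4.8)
The plan is to fix an arbitrary vertex $w$ in the set on the left-hand side and prove $w\in C_v$, distinguishing the two cases $w\in\delta(v)$ and $w\in\overline{\delta(v)}$. Everything rests on Lemma~\ref{mm:lemma:legal}, from which I will extract two ``monotonicity'' relations and one ``sibling'' relation between the K\"onig covers attached to nodes of $T$. First, whenever $p$ is a descendant of $q$ we have $\delta(p)\subseteq\delta(q)$; applying Lemma~\ref{mm:lemma:legal} to the tripartition $(\delta(p),\,\overline{\delta(q)},\,\delta(q)\setminus\delta(p))$ — and observing that $C_p$ is the $\delta(p)$-K\"onig cover of $G[\delta(p),\overline{\delta(p)}]$ while $C_q$ is in particular a minimum vertex cover of $G[\delta(q),\overline{\delta(q)}]$ — yields
\[
  \delta(p)\cap C_q \subseteq \delta(p)\cap C_p
  \qquad\text{and}\qquad
  \overline{\delta(q)}\cap C_p \subseteq \overline{\delta(q)}\cap C_q .
\]
Second, if $z$ has children $z_1,z_2$ then $\delta(z)=\delta(z_1)\cup\delta(z_2)$ is a disjoint union, and applying Lemma~\ref{mm:lemma:legal} to the tripartition $(\delta(z_2),\,\delta(z_1),\,\overline{\delta(z)})$ gives $\delta(z_2)\cap C_{z_1}\subseteq\delta(z_2)\cap C_{z_2}$. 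Each relation is a single instance of Lemma~\ref{mm:lemma:legal} after a routine check that the named set is the intended part of the tripartition.

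Suppose first $w\in\overline{\delta(v)}$. Since $w$ lies in $\bigcup_{x\in\mathcal C}C_x$, there is a descendant $y$ of $v$ with $w\in C_y$, and $\delta(y)\subseteq\delta(v)$. The second monotonicity relation with $p=y$ and $q=v$ gives $\overline{\delta(v)}\cap C_y\subseteq\overline{\delta(v)}\cap C_v$, hence $w\in C_v$. (Only the descendant union is used in this case.)

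Now suppose $w\in\delta(v)$. Since $w$ lies in $\bigcup_{x\in V(T)\setminus\mathcal C}C_x$, there is a node $x$ that is not a descendant of $v$ with $w\in C_x$; if $x=v$ we are done, so assume otherwise, whence $x$ is either an ancestor of $v$ or incomparable with $v$. If $x$ is an ancestor of $v$, the first monotonicity relation with $p=v$ and $q=x$ gives $\delta(v)\cap C_x\subseteq\delta(v)\cap C_v$, so $w\in C_v$. If $x$ and $v$ are incomparable, let $z$ be their least common ancestor, $z_1$ the child of $z$ whose subtree contains $x$, and $z_2$ the child whose subtree contains $v$; then $\delta(x)\subseteq\delta(z_1)$, $\delta(v)\subseteq\delta(z_2)$ and $\delta(z_1)\cap\delta(z_2)=\emptyset$, so in particular $w\in\delta(v)\subseteq\overline{\delta(z_1)}$ and $w\in\delta(v)\subseteq\delta(z_2)$. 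Chasing $w$ through three applications then finishes: the second monotonicity relation ($p=x$, $q=z_1$) gives $w\in C_{z_1}$; the sibling relation gives $w\in C_{z_2}$; and the first monotonicity relation ($p=v$, $q=z_2$) gives $w\in C_v$. Together with the first case this proves the lemma; the degenerate situations ($v$ the root, $v$ a leaf, or one of the two unions empty) hold trivially.

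The step that needs the most care is the incomparable subcase. The tempting route — lift $w$ from $C_x$ up to $C_z$ at the least common ancestor and then push it back down to $C_v$ — does not work, because $w\in\delta(v)$ falls in the ``middle'' block $\delta(z)\setminus\delta(x)$ of the tripartition relevant to the upward step, where neither monotonicity relation applies. Detouring through the sibling $z_2$ of $z_1$ keeps $w$ on the side of every tripartition where the needed containment is available, and most of the write-up is just the bookkeeping confirming this for each of the three invocations of Lemma~\ref{mm:lemma:legal}. It is also worth remarking that, since all three relations are phrased directly through Lemma~\ref{mm:lemma:legal} rather than through the ``lies in some / every minimum vertex cover'' reading supplied by Lemma~\ref{chp:mm:lemma:unique_koenig}, the asymmetry between the two sides of each cut is handled automatically.
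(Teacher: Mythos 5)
Your proof is correct and follows essentially the same route as the paper: fix $w$ in the left-hand side, split on whether $w$ lies in $\delta(v)$, and in each case invoke Lemma~\ref{mm:lemma:legal} with a suitable tripartition. The only divergence is the incomparable subcase, where the paper needs just one application of Lemma~\ref{mm:lemma:legal}: your ``sibling relation'' uses only that the two $\delta$-sets are disjoint, so it applies directly to the pair $(x,v)$ (since $\delta(x)\cap\delta(v)=\emptyset$, i.e.\ $\delta(v)\subseteq\overline{\delta(x)}$), giving $\delta(v)\cap C_x\subseteq\delta(v)\cap C_v$ in one step -- the three-step detour through the children of the least common ancestor is sound but unnecessary.
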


\begin{proof} 
 First notice for all $x \in \mathcal{C}$, since $C_x$
 is a $\delta(x)$-K\"onig cover and $C_v$ a minimum vertex cover, from
 Lemma~\ref{mm:lemma:legal} we have that $C_x \cap
 \compl{\delta(x)} \subseteq C_v \cap \compl{\delta(x)}$.  In
 particular, since $\delta(x) \subseteq \delta(v)$, we have that $C_x \setminus
 \delta(v) \subseteq C_v \setminus \delta(v) \subseteq C_v$.
 Since each vertex of $V(G)$ is either in $\delta(v)$ or not in $\delta(v)$, by showing that also for all $x \in
 (V(T) \setminus \mathcal{C})$ we have $C_x \cap \delta(v) \subseteq C_v$ we
 can conclude that the lemma holds: 
 For all $x \in V(T) \setminus \mathcal{C}$ either $\delta(v) \subseteq \delta(x)$
 (when $x$ is an ancestor of $v$) or $\delta(v) \subseteq \compl{\delta(x)}$
 (when $x$ is neither a descendant of $v$ nor an ancestor of $v$), in either case, we
 can apply the $\delta(v)$-K\"onig cover $C_v$ of $G[\delta(v),
 \compl{\delta(v)}]$ and the minimum vertex cover $C_x$ of $G[\delta(x),
 \compl{\delta(x)}]$ to Lemma~\ref{mm:lemma:legal} and see that $C_x \cap
 \delta(v) \subseteq C_v \cap \delta(v) \subseteq C_v$. 
\end{proof} 

\subsection{The new characterization of mmw}

We say a graph is \emph{nontrivial} if it has an edge.

\begin{theorem}\label{thm:newcharac}
A nontrivial graph $G=(V,E)$ has $\mmw(G) \leq k$ if and only if there is a tree $T$ of max degree at most $3$ and for each vertex $u \in V$ a nontrivial subtree $T_u$ of $T$ such that i) if $uv \in E$ then the subtrees $T_u$ and $T_v$ have at least one vertex of $T$ in common, and ii) for every edge of $T$ there are at most $k$ subtrees using this edge.
\end{theorem}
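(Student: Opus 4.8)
\emph{Proof proposal.} I would prove the two implications separately, starting with the easier ``if'' direction. In both directions one may assume $G$ is connected: an isolated vertex $u$ is accommodated by giving it a private pendant edge of the tree as its subtree $T_u$ (which affects no condition), and a disconnected $G$ is handled by taking representations of the components and joining the trees by a path whose edges lie in no subtree.

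\emph{The ``if'' direction.} Suppose $T$ and subtrees $\{T_u\}$ satisfy (i) and (ii); the plan is to build a branch decomposition over $V$ of mm-width $\le k$. For an edge $f$ of $T$, deleting $f$ splits $V(T)$ into a ``left'' and a ``right'' part; call $u$ $f$-\emph{left}, $f$-\emph{right}, or $f$-\emph{straddling} according to whether $T_u$ lies in the left part, the right part, or uses $f$, and let $\mathrm{str}(f)$ be the set of straddling vertices, so $|\mathrm{str}(f)|\le k$ by (ii). The key observation is that $\mathrm{str}(f)$ is a vertex cover of the bipartite graph whose parts are the $f$-left and $f$-right vertices and whose edges are the $G$-edges between them: if $uv\in E$ with $u$ $f$-left and $v$ $f$-right, then $T_u$ and $T_v$ lie in disjoint parts of $T$, contradicting (i). Hence by Theorem~\ref{thm:Koenig} this bipartite graph has maximum matching $\le|\mathrm{str}(f)|\le k$. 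To turn $T$ into a branch decomposition one first deletes repeatedly every leaf of $T$ whose incident edge is used by no subtree (such a leaf lies in no subtree, so nothing changes), then for each $u$ attaches a new pendant leaf $\ell_u$ at a node of $T_u$, subdividing incident edges of $T_u$ to keep maximum degree $3$, arranging by a distinct-representatives argument that the resulting $T'$ has leaf set exactly $\{\ell_u:u\in V\}$. Every edge of $T'$ then either is incident to some $\ell_u$ --- inducing the partition $(\{u\},V\setminus\{u\})$ with $\mm=1$ --- or refines an edge $f$ of $T$ and so induces a partition $(L\cup S_1,R\cup S_2)$ with $L,R$ the $f$-left, $f$-right vertices and $S_1\cup S_2=\mathrm{str}(f)$; in the latter case $\mathrm{str}(f)$ is a vertex cover of the cut, so $\mm\le k$. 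Thus $\mmw(G)\le k$.

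\emph{The ``only if'' direction.} Let $(T,\delta)$ be a branch decomposition over $V$ of mm-width $\le k$, rooted at $r$ by subdividing an edge, and for each node $v$ let $C_v$ be the $\delta(v)$-K\"onig cover of $G[\compl{\delta(v)},\delta(v)]$, a minimum vertex cover of the cut at $v$, so $|C_v|\le k$. I would form $T^\ast$ by subdividing every edge $e$ of $T$ with a new node $m_e$ and attach bags: $X_{m_e}=C_{v(e)}$ where $v(e)$ is the lower endpoint of $e$; $X_\ell=\{\delta(\ell)\}$ for each leaf $\ell$; and $X_v=\{u:u\in X_{m_e}\text{ for at least two edges }e\text{ incident to }v\}$ for each internal node $v$. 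Set $T_u=\{t\in V(T^\ast):u\in X_t\}$. Then $T^\ast$ has maximum degree $3$; each $T_u$ is nontrivial, since $\delta^{-1}(u)$ and the adjacent subdivision node both carry $u$ in their bag; condition (ii) holds because every edge of $T^\ast$ joins some $v$ to some $m_e$ and is used by exactly the $T_u$ with $u\in X_v\cap X_{m_e}\subseteq C_{v(e)}$, of which there are $\le k$; and condition (i) is obtained by walking the path in $T^\ast$ from $\delta^{-1}(u)$ to $\delta^{-1}(v)$: at every separating cut the crossing edge $uv$ forces $u$ or $v$ into the corresponding K\"onig cover, and the monotonicity lemmas propagate this so that the two subtrees are forced to meet.

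\emph{Main obstacle.} The real work in the ``only if'' direction is showing that $T_u=\{t:u\in X_t\}$ is \emph{connected}, and more sharply that it reaches far enough along the tree to secure (i). This is exactly what K\"onig-cover monotonicity provides: Lemma~\ref{mm:lemma:legal} yields, for $w$ a descendant of $v$, both $\delta(w)\cap C_v\subseteq C_w$ and $\compl{\delta(v)}\cap C_w\subseteq C_v$, controlling how the bags behave along a root-to-leaf path, while Lemma~\ref{mm:lemma:monotoneunion} shows that if $u$ appears in covers on two sides of a node $v$ then $u\in C_v$ --- which, combined with the ``closing-up'' definition of $X_v$ at original nodes, forces the set of bags containing $u$ to be closed under taking paths. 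I expect this connectivity-and-reach argument, rather than the verification of (ii) or the reductions, to be the crux; the remaining fiddly point --- in the ``if'' direction, making the leaf set of $T'$ equal $V$ exactly while preserving maximum degree $3$ --- is routine given the pruning and distinct-representatives step.
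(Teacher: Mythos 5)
Your ``if'' direction is essentially the paper's backward argument (the straddling set $\mathrm{str}(f)$ is a vertex cover of the crossing bipartite graph, hence bounds the maximum matching; then prune unmapped leaves and attach one pendant leaf per vertex), and it is fine modulo the leaf bookkeeping you yourself flag as routine. The ``only if'' direction, however, has a genuine flaw: your rule $X_v=\{u: u\in X_{m_e}\text{ for at least \emph{two} edges } e \text{ incident to } v\}$ at the original nodes of $T$ destroys condition (i). Condition (i) must be certified exactly at those nodes $w$ of $T$ where $T_u$ and $T_v$ meet ``head on'' without overlapping on an edge, i.e.\ where $u$ lies in the K\"onig cover of one edge at $w$ and $v$ in the K\"onig cover of another, but neither lies in two covers at $w$; your $X_w$ then contains neither $u$ nor $v$, and $T_u\cap T_v=\emptyset$. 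Already $G=K_2$ with its unique (rooted) branch decomposition kills the construction: the covers of the two pendant edges are $\{a\}$ and $\{b\}$, your root bag is empty, and $T_a=\{\delta^{-1}(a),m_{e_a}\}$ and $T_b=\{m_{e_b},\delta^{-1}(b)\}$ do not meet although $ab\in E$. The monotonicity lemmas cannot rescue this: they force $a$ or $b$ into \emph{each} cover along the $\delta^{-1}(a)$--$\delta^{-1}(b)$ path, but they never force either vertex into \emph{two} covers at the transition node.

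The fix is exactly what the paper does: take $T_u$ to be the subforest of $T$ \emph{induced by} the edges whose K\"onig cover contains $u$ (in your notation, put $u\in X_v$ as soon as $u\in X_{m_e}$ for at least \emph{one} edge $e$ at $v$). Then every edge of the $\delta^{-1}(u)$--$\delta^{-1}(v)$ path belongs to $T_u$ or $T_v$ (the crossing edge $uv$ forces an endpoint into each cover along the way), the first such edge belongs to $T_u$ only and the last to $T_v$ only, so some two consecutive path edges lie in $T_u$ and $T_v$ respectively and share a node of $T$, which gives (i). Condition (ii) is unharmed, since the subtrees using an edge of $T$ are then exactly the members of that edge's K\"onig cover, still at most $k$; and connectivity of each $T_u$ is the paper's application of Lemma~\ref{mm:lemma:monotoneunion}, as you correctly identify.
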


\begin{proof}
Forward direction:
Let $(T,\delta)$ be a rooted branch decomposition over $V$ having mm-width at
most $k$, and assume $G$ has no isolated vertices.
For each edge $e=uv$ of $T$, with $u$ a child of $v$, assign the $\delta(u)$-K{\"o}nig cover $C_u$ of $G[\delta(u),V \setminus \delta(u)]$ to the edge $uv$.
For each vertex $x$ of $G$, define the set of edges of $T$ whose K{\"o}nig cover contains $x$ and let $T_x$ be the sub-forest of $T$ induced by these edges. Using Lemma~\ref{mm:lemma:monotoneunion} we first show that $T_x$ is a connected forest and thus a subtree of $T$.
Consider edge $e=uv$ of $T$. Let $p$ be the lowest common ancestor of $u$ and $v$.
For every vertex $w$ on the path from $p$ to $u$ and on the path from $p$ to $v$, except $p$,
we know that exactly one of $u,v$ is a descendant of $w$.
By Lemma~\ref{mm:lemma:monotoneunion}, $(C_u \cap C_v) \subseteq C_w$.
It means that if a vertex $x$ of $G$ is in both $C_u$ and $C_v$ then it is also in $C_w$,
which implies that $T_x$ is connected.

Now, since the branch decomposition has mm-width at most $k$ part i) in the statement of the Theorem holds.
For an arbitrary edge $ab$ of $G$, consider any edge $e$ of $T$ on the path from $\delta^{-1}(a)$ to $\delta^{-1}(b)$ and the partition $(A,B)$ induced by $e$ where $a\in A$, $b\in B$. Then the K{\"o}nig cover of $e$ must contain one of $a$ and $b$, and thus, ii) holds as well. 
Finally, $T_x$ is nontrivial because the edge of $T$ incident with a leaf $\delta^{-1}(x)$ assigns the K{\"o}nig cover $\{x\}$.
If $G$ has isolated isolated vertices, $T_x$ is not nontrivial for isolated
vertex $x$. We fix this by setting $T_x$ to consist exactly of the edge incident
with $\delta^{-1}(x)$, for any isolated vertex $x$ of $G$.

Backward direction: For each given subtree $\{T_u\}_{u\in V}$ of $T$, choose an edge in $T_u$ (it is also in $T$) and append in the tree $T$ a leaf $\ell_u$, and extend $T_u$ to contain $\ell_u$ and set $\delta(\ell_u)=u$. Exhaustively remove leaves (from both $T$ and the subtrees) that are not mapped by $\delta$. Call the resulting tree $T'$ and subtrees $\{T'_u\}_{u\in V}$. 
Note that subtrees $\{T'_u\}_{u\in V}$ and $T'$ still satisfy i) and ii). 
We claim that $(T',\delta)$ is a branch decomposition of mm-width at most $k$. It is clearly a branch decomposition over $V$, and for any edge $e$ of $T'$, if we choose $S \subseteq V$ to be those $u$ with $T_u$ using this edge $e$, then this will be a vertex cover of the bipartite graph $H$ given by this edge $e$, and of size at most $k$ because 
for an edge $xy$ in $H$, one of $T_x$ and $T_y$ must contain $e$.
\end{proof}

In the introduction we mentioned analogous characterizations of treewidth and branchwidth, for these see e.g. \cite{PT2009}.
 Another alternative characterization is the following.
 
 \begin{corollary}
A graph $G$ has $\mmw(G) \leq k$ if and only if it is a subgraph of a chordal graph $H$ and for every maximal clique $X$ of $H$
there exists $A,B,C \subseteq X$ with $A \cup B \cup C=X$ and $|A|,|B|,|C| \leq k$
such that any subset of $X$ that is a minimal separator of $H$ is a subset of either $A, B$ or $C$.
 \end{corollary}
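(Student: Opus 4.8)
The plan is to derive both implications from the new subtree characterization (Theorem~\ref{thm:newcharac}), combined with the classical facts that the intersection graphs of subtrees of a tree are exactly the chordal graphs, and that a connected chordal graph has a clique tree whose edge labels are precisely its minimal separators and in which the maximal cliques containing any fixed vertex form a subtree.

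\textbf{Forward direction.} Suppose $\mmw(G)\le k$. Fix, via Theorem~\ref{thm:newcharac}, a tree $T$ of maximum degree at most $3$ with nontrivial subtrees $\{T_u\}_{u\in V}$ satisfying (i) and (ii), and let $H$ be the intersection graph of the $T_u$ on vertex set $V$. Then $G\subseteq H$ by (i), and $H$ is chordal. Every maximal clique $X$ of $H$ equals $K_t:=\{u: t\in V(T_u)\}$ for some node $t$ of $T$ (Helly property of subtrees of a tree). For such a $t$, let $e_1,e_2,e_3$ be the at most three edges of $T$ at $t$ and put $A_i=\{u\in X: e_i\in E(T_u)\}$. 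Then $|A_i|\le k$ by (ii), and $A_1\cup A_2\cup A_3=X$ since a nontrivial subtree containing $t$ must contain an edge incident with $t$. It remains to show that every minimal separator $S\subseteq X$ lies in some $A_i$, which is the heart of the argument. Here I would use that $H-S$ has at least two full components: for a component $D$ of $H-S$ the set $\hat D:=\bigcup_{w\in D}V(T_w)$ is a subtree of $T$, and two different full components yield disjoint such subtrees (their vertices are pairwise non-adjacent), so at most one $\hat D$ contains $t$. Picking a full component $D$ with $t\notin\hat D$, the subtree $\hat D$ lies wholly inside one branch of $T$ at $t$, say the branch through $e_i$; since $D$ is full, each $u\in S$ has a neighbour in $D$, so $T_u$ meets that branch, and as $t\in T_u$ and $T_u$ is connected this forces $e_i\in E(T_u)$. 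Hence $S\subseteq A_i$.

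\textbf{Backward direction.} Suppose $G\subseteq H$ with $H$ chordal satisfying the stated property; reducing componentwise (mm-width of a disjoint union is the maximum over its parts), assume $H$ is connected, and fix a clique tree $\mathcal T$ of $H$. I would turn $\mathcal T$ into a tree $T$ of maximum degree at most $3$ by replacing each clique-node $X$ by a small gadget. Take $A,B,C$ with $A\cup B\cup C=X$ and $|A|,|B|,|C|\le k$ as granted by the hypothesis. For each edge of $\mathcal T$ incident with $X$, its label $S$ is a minimal separator contained in $X$, so $S$ lies in one of $A,B,C$; attach a \emph{terminal} node with bag $S$ as a pendant leaf of one of three paths (``caterpillars'') whose spine nodes all carry bag $A$, respectively $B$, respectively $C$, and join the three caterpillar roots to a common hub node $w_X$ with bag $X$. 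Finally, for each edge $XX'$ of $\mathcal T$ join the two corresponding terminals by an edge of $T$. Setting $T_u$ to be the set of nodes of $T$ whose bag contains $u$, one checks that each $T_u$ ($u\in V(G)$) is a nontrivial subtree of $T$, that any two vertices adjacent in $H$ (in particular in $G$) share the hub of a common maximal clique, and --- the only quantitative step --- that every edge of $T$ is used by at most $k$ subtrees: spine and hub edges carry a label equal to one of $A,B,C$, and a terminal edge (inside a gadget or across gadgets) carries a label that is a subset of one of $A,B,C$. Theorem~\ref{thm:newcharac} applied to $T$ and $\{T_u\}_{u\in V(G)}$ then yields $\mmw(G)\le k$.

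\textbf{Main obstacle.} The delicate point is the minimal-separator claim in the forward direction: morally, a minimal separator sitting inside $K_t$ must ``point in a single direction'' away from $t$, and the full-component / disjoint-subtree argument above is what makes this precise. The backward direction is a careful but essentially routine gadget construction; the only side issues --- the reduction to connected $H$ (join the component trees by edges with empty label, which preserves maximum degree $3$) and degenerate edgeless cases --- are straightforward.
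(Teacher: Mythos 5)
Your proposal is correct and follows essentially the same route as the paper's sketch: the forward direction reads off the chordal graph from the subtrees-of-a-ternary-tree representation of Theorem~\ref{thm:newcharac}, and the backward direction builds a ternary tree from a clique tree via the same hub-plus-three-caterpillars gadget. Your full-component argument showing that every minimal separator contained in $K_t$ must lie in one of the three edge-sets $A_i$ at $t$ is a welcome addition, since the paper's sketch omits the verification of exactly that point.
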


 We only sketch the proof, which is similar to an alternative characterization of branchwidth given in \cite{PT2009}.
 We say a tree is \emph{ternary} if it has maximum degree at most $3$.
 Note that a graph is chordal if and only if it is an intersection graph of subtrees of a tree~\cite{Gavril1974}.
 In the forward direction, take the chordal graph resulting from the subtrees of ternary tree representation. In the backward direction, take a clique tree of $H$ and make a ternary tree decomposition (which is easily made into a subtrees of ternary tree representation) by for each maximal clique $X$ of degree larger than three making a bag $X$ with three neighboring bags $A,B,C$. If minimal separators $S_1,...,S_q$ subset of $X$ are contained in $A$ make a path extending from bag $A$ of $q$ new bags also containing $A$, with a single bag containing $S_i, 1 \leq i \leq q$ attached to each of them. These ternary subtrees, one for each maximal clique, is then connected together in a tree by the structure of the clique tree, adding an edge between bags of identical minimal separators.

\section{Fast DP for Dominating Set parameterized by mm-width}

For graph $G=(V,E)$ a subset of vertices $S \subseteq V$ is said to {\em dominate} the vertices in $N[S]$, and it is a {\em dominating set}  if $N[S]=V$.
Given a rooted branch decomposition $(T, \delta)$ of $G$ of mm-width $k$, we will in this section give an $O^*(8^k)$ algorithm for computing the size of a Minimum Dominating Set of $G$. This by an algorithm doing dynamic programming along a rooted tree decomposition $(T', \{X_t\}_{t \in V(T')})$ of $G$ that we compute from $(T, \delta)$ as follows.

Given a rooted branch decomposition $(T, \delta)$ of $G$ having mm-width $k$ the proof of Theorem~\ref{thm:newcharac} yields a polynomial-time algorithm (using an algorithm for maximum matching in bipartite graphs) finding
a family $\{T_u\}_{u\in V(G)}$ of nontrivial subtrees of $T$ (note we can assume $T$ is a rooted tree with root of degree two and all other internal vertices of degree three) such that 
i) if $uv \in E(G)$ then the subtrees $T_u$ and $T_v$ have at least one vertex of $T'$ in common, and 
ii) for every edge of $T'$ there are at most $k$ subtrees using this edge.
From this it is easy to construct a rooted tree decomposition $(T',\{X_t\}_{t \in V(T')})$ of $G$, having the properties described in Figure \ref{fig:tree}.
Let $T'$ be a tree with vertex set $A\cup B \cup \{r\}$ 
where $A$ is the set of edges of $T$, 
$B$ is the set of non-root vertices (all of degree-$3$) of $T$,
 and 
$r$ is the root of $T$ and also the root of $T'$.
Two vertices $e,v$ of $T'$ are adjacent if and only if 
$e\in A$ and $v\in B \cup \{r\} $ are incident in $T$.
For a vertex $e\in A$, let $X_e$ be the set of vertices in $G$ such that 
if a subtree $T_w$ uses edge $e$ of $T$, then $w\in X_e$.
For a vertex $v\in B$, let $X_v$ be the set of vertices in $G$ such that
for the three incident edges $e_1,e_2,e_3$ of $v$ in $T$, $X_v=X_{e_1}\cup X_{e_2}\cup X_{e_3}$.
Let $X_r=X_{e_1}\cup X_{e_2}$ if $e_1$ and $e_2$ are incident with $r$ in $T$.
Then $(T',\{X_t\}_{t \in V(T')})$ is a tree decomposition of $G$ with a root $r$, having the properties described in Figure \ref{fig:tree}, which we will use in the dynamic programming.

\bigskip

Let us now define the relevant subproblems for the dynamic programming over this tree decomposition. 
For node $t$ of the tree we denote by $G_t$ the graph induced by the union of $X_u$ where $u$ is a descendant of $t$.
A coloring of a bag $X_t$ is a mapping $f: X_t \rightarrow \{1, 0, *\}$ with the meaning that: all vertices with color 1 are contained in the dominating set of this partial solution in $G_t$, all vertices with color 0 are dominated, while vertices with color * might be dominated, not dominated, or in the dominating set. Thus the only restriction is that a  vertex with color 1 must be a dominator, and a vertex with color 0 must be dominated.
Thus, for any $S \subseteq V(G)$ there is a set $c(S)$ of $3^{|S|}2^{|N(S)|}$ colorings $f: V(G) \rightarrow \{1, 0, *\}$ compatible with taking $S$ as set of dominators, 
with vertices of $S$ colored 1, 0 or $*$, vertices of $N(S)$ colored 0 or $*$, and the remaining vertices colored $*$.

For a coloring $f$ of bag $X_t$ we denote by $T[t,f]$ (and view this as a 'Table' of values) the minimum $|S|$ over all $S \subseteq V(G_t)$ such that there exists $f' \in c(S)$ with $f'|_{X_t}=f$ and $f'|_{V(G_t) \setminus X_t}$ having everywhere the value 0.
In other words, the minimum size of a set $S$ of vertices of $G_t$ that dominate all vertices in $V(G_t) \setminus X_t$, with a coloring $f'$ compatible with taking $S$ as set of dominators, such that $f'$ restricted to $X_t$ gives $f$.  
 If no such set $S$ exists, then $T[t,f]= \infty$.
Note that the size of the minimum dominating set of $G$ is the minimum value over all $T[r,f]$ where $f^{-1}(*)=\emptyset$ at the root $r$.
We initialize the table at a leaf $\ell$, with $X_{\ell}=\{v\}$ as follows. Denote by $f_i$ the coloring from $\{v\}$ to $\{1,0,*\}$ with $f_i(v)=i$ for $i\in \{1,0,*\}$.
Then for a leaf bag $X_{\ell}$, set $T[\ell,f_1]:=1$, $T[\ell,f_0]:=\infty$, $T[\ell,f_{*}]:=0$.

For internal nodes of the tree, instead of separate `Join, Introduce and Forget' operations we will give a single update rule with several stages.
We will be using an Extend-Table subroutine which takes a partially filled table $T[t,\cdot]$ and extends it to table $T'[t,\cdot]$
so the result will adhere to the above definition, ensuring the monotonicity property that $T'[t,f] \leq T'[t,f']$ for any $f$ we can get from $f'$ by changing the color of a vertex from 1 to 0 or $*$, or from 0 to $*$. Extend-Table is implemented as follows:

\begin{enumerate}[(a)]
\item Initialize. For all $f$, if $T[t,f]$ is defined then $T'[t,f]:=T[t,f]$, else $T'[t,f]:=\infty$.

\item Change from 1 to 0. For $q=|X_t|$ down to 1: for any $f$ in $T'[t,f]$ where $|\{v: f(v)=1\}|=q$, for any choice of a single vertex $u \in \{v: f(v)=1\}$ set $f_u(u)=0$ and set $f_u(x)=f(x)$ for $x \neq u$, and update $T'[t,f_u]:=\min \{T'[t,f_u], T'[t,f]\}$. 

\item Change from 0 to $*$. Similarly as in step (b).
\end{enumerate}

Note the transition from color 1 to $*$ will happen by transitivity. The time for Extend-Table is proportional to the number of entries in the tables times $|X_t|$.

    \begin{figure}[ht!]
      \center
      \includegraphics[scale=0.5]{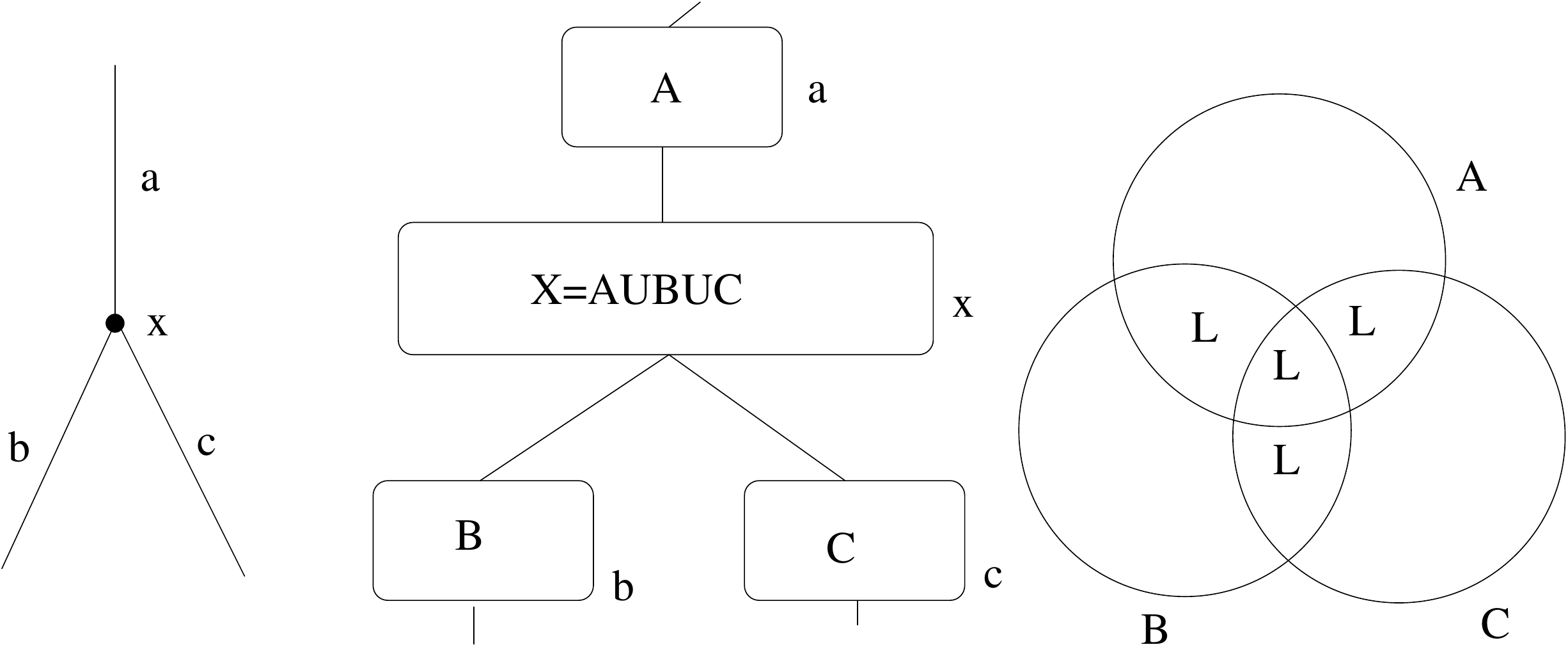}
      \caption{Part of ternary tree used in the subtree representation of $G$ on the left, with node $x$ having three incident edges $a,b,c$, with subtrees of vertices contained in $A,B,C \subseteq V(G)$ using these edges respectively, giving rise to the four bags in the tree decomposition shown in the middle, with constraint $|A|,|B|,|C| \leq k$. }
      \label{fig:tree}
    \end{figure}

Assume we have the situation in Figure \ref{fig:tree}, corresponding to the bags surrounding any degree-three node $x$ of the tree decomposition. This arises from the branch decomposition (and the subtrees of tree representation) having a node incident to three edges, creating three bags $a,b,c$ containing subsets of vertices $A,B,C$, respectively, each of size at most $k$, and giving rise to the four bags $a,b,c,x$ in the figure, with the latter containing subsets of vertices  $X= A \cup B \cup C$. Let $L= (A \cap B) \cup (A \cap C) \cup (B \cap C)$.
Assume we have already computed $T[b,f]$ and $T[c,f]$ for all $3^{|B|}$ and
$3^{|C|}$ choices of $f$, respectively. We want to compute $T[a,f]$ for all
$3^{|A|}$ choices of $f$, in time $O^*(\max\{3^{|A|}, 3^{|B|}, 3^{|C|}, 3^{|L|} 2^{|X \setminus L|}\})$. 
Note that we will not compute the table $T[x,\cdot]$, as it would have $3^{|X|}$ entries, which is more than the allowed time bound. Instead, we compute a series of tables:

\begin{enumerate}[(1)]
 \item $T^1_b[x,\cdot]$ (and $T^1_c[x,\cdot]$) of size $3^{|B|}$, by for each entry $T[b,f]$ extending the coloring $f$ of $B$ to a unique coloring $f'$ of $X$ based on the neighborhood of the dominators in $f$,
 \item $T^2_b[x,\cdot]$ (and $T^2_c[x,\cdot]$) of size at most $\min (3^{|B|}, 3^{|B \cap L|}2^{|X \setminus (B \cap L)|})$, by changing each coloring $f$ of $X$ to a coloring $f'$ of $X$ where vertices in $B \setminus L$ having color 1 instead are given color 0 (note these vertices have no neighbors in $V(G) \setminus V(G_x)$),
 \item $T^3_b[x,\cdot]$ (and $T^3_c[x,\cdot]$) of size exactly $3^{|B \cap L|}2^{|X \setminus (B \cap L)|}$, with $f^{-1}(1) \subseteq B \cap L$, by running Extend-Table on $T^2_b[x,\cdot]$,
 \item $T^1_{sc}[x,\cdot]$ of size $3^{|L|}2^{|X \setminus L|}$ by subset convolution over parts of $T^3_b[x,\cdot]$  and $T^3_c[x,\cdot]$,
 \item $T^2_{sc}[x,\cdot]$ of size $3^{|L|}2^{|X \setminus L|}$ by running Extend-Table on $T^1_{sc}[x,\cdot]$,
 \item $T[a, \cdot]$ of size $3^{|A|}$ by going over all $3^{|A|}$ colorings of $A$ and minimizing over appropriate entries of $T^2_{sc}[x,\cdot]$.
\end{enumerate}

Note that in Step (4) we use the following:

\begin{theorem} [Fast Subset Convolution \cite{BHKK2007}] \label{fss}
For two functions $g,h:2^V\rightarrow \{-M,\ldots,M\}$, given all the $2^{\abs{V}}$ values of $g$ and $h$ in the input,
all $2^{\abs{V}}$ values of the subset convolution of $g$ and $h$ over the integer min-sum semiring,
i.e. 
$(g * h)(Y)= \min_{Q \cup R=Y \mbox{ and } Q \cap R= \emptyset} g(Q)+h(R)$,
can be computed in time $2^{\abs{V}}\abs{V}^{O(1)}\cdot O(M \log M \log\log M)$.
\end{theorem}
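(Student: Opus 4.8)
The plan is to reduce the min-sum semiring convolution to the ordinary \emph{ring} convolution $(g*h)(Y)=\sum_{Q\subseteq Y}g(Q)h(Y\setminus Q)$ over a suitable commutative ring, and to compute the latter by a ranked zeta/M\"obius transform. For the ring case, first note that pointwise multiplication of fast zeta (down-set sum) transforms only yields the \emph{cover product} $\sum_{Q\cup R=Y}g(Q)h(R)$, without the disjointness constraint. I would recover disjointness by \emph{ranking on cardinality}: set $g_i(X)=g(X)$ if $\abs X=i$ and $0$ otherwise (similarly $h_i$), compute each $\hat{g_i}(Y)=\sum_{X\subseteq Y}g_i(X)$ by Yates' coordinatewise dynamic programme in $O(2^{\abs V}\abs V)$ time, form the pointwise products $p_j(Y)=\sum_{i=0}^{j}\hat{g_i}(Y)\hat{h_{j-i}}(Y)$, and apply the fast M\"obius transform to obtain $f_j$ with $\hat{f_j}=p_j$. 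Then $f_j(Y)=\sum_{Q\cup R=Y,\,\abs Q+\abs R=j}g(Q)h(R)$, and since $Q\cup R=Y$ implies $\abs Q+\abs R=\abs Y+\abs{Q\cap R}$, the value $(g*h)(Y)$ is exactly $f_{\abs Y}(Y)$. This costs $2^{\abs V}\abs V^{O(1)}$ ring operations.

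To handle the min-sum semiring I would encode a value $v$ as the monomial $x^{v}$ in $\mathbb{Z}[x]$, so that semiring addition becomes monomial multiplication and semiring $\min$ becomes ``smallest exponent occurring with nonzero coefficient''. After shifting the inputs to be nonnegative (add $M$ to $g$ and to $h$), every sum $g(Q)+h(R)$ arising in a disjoint split lies in $\{0,\dots,4M\}$, so every polynomial produced by the above algorithm run over $\mathbb{Z}[x]$ has degree at most $4M$; its coefficients merely count multiplicities, hence are bounded by $3^{\abs V}$ and have $O(\abs V)$ bits. For each $Y$ the output polynomial has support equal to the set of attainable disjoint-split sums, so one reads off $(g*h)(Y)$ as the least exponent with nonzero coefficient, minus $2M$ (and $+\infty$ when the polynomial is $0$). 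Each ring operation is thus an addition or multiplication of integer polynomials of degree $O(M)$ with $O(\abs V)$-bit coefficients, which costs $O(M\log M\log\log M)$ after absorbing $\abs V^{O(1)}$ factors via fast polynomial multiplication; multiplying by the $2^{\abs V}\abs V^{O(1)}$ operations from the ring step yields the stated bound.

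The step I expect to be the crux is verifying the ranked-transform identity $f_j(Y)=\sum_{Q\cup R=Y,\,\abs Q+\abs R=j}g(Q)h(R)$ together with the observation that the layer $j=\abs Y$ is precisely the disjoint part; once that combinatorial heart is in place, the monomial encoding, the degree and coefficient bounds, and the use of Yates' transform and fast polynomial multiplication are all routine bookkeeping.
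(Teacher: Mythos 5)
Theorem~\ref{fss} is stated in the paper as a black-box citation of Bj\"orklund, Husfeldt, Kaski and Koivisto, with no proof given, so there is nothing internal to compare against; your argument is precisely the standard proof from that reference (ranked zeta/M\"obius transforms to get the ring subset convolution, the identity $\abs Q+\abs R=\abs Y+\abs{Q\cap R}$ to isolate the disjoint layer, and the monomial/Kronecker encoding of the min-sum semiring with degree $O(M)$ and polynomially many bits per coefficient), and it is correct. The only point worth double-checking, which you handle implicitly, is that although intermediate M\"obius-transformed polynomials can have negative coefficients, the final $f_{\abs Y}(Y)$ is a genuine multiplicity-counting polynomial with nonnegative coefficients, so reading off the least exponent with nonzero coefficient is sound.
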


Let us now give the details of the first three steps:

\begin{enumerate}[(1)]
 \item Compute $T^1_b[x,\cdot]$. 
In any order, go through all $f: B \rightarrow \{1,0,*\}$ and compute $f': B \cup A \cup C \rightarrow \{1,0,*\}$ by 
 $$f'(v)= \left\{ \begin{array}{ll}
                   f(x) & \mbox{if $v \in B$}\\
                   0 & \mbox{if $v \not \in B$ and $\exists u \in B: f(u)=1 \wedge uv \in E(G)$}\\
                   * & \mbox{otherwise}\\
                  \end{array} \right. $$
and set $T^1_b[x,f']:=T[b,f]$. 
\item Compute $T^2_b[x,\cdot]$. First, initialize $T^2_b[x,f] = \infty$ for all
$f: B \cup A \cup C \to \{1,0,*\}$ where $f^{-1}(1) \subseteq B \cap L$.
In any order, go through all $f: B \cup A \cup C \rightarrow \{1,0,*\}$ such that $T^1_b[x,f]$ was defined in the previous step, 
 and compute $f': B \cup A \cup C \rightarrow \{1,0,*\}$ by 
$$f'(v)= \left\{ \begin{array}{ll}
                    0 & \mbox{if $v \in B \setminus L$ and $f(v)=1$}\\
                  f'(v)=f(v) & \mbox{otherwise}\\
                  \end{array} \right. $$ 
and set $T^2_b[x,f']:=\min \{T^2_b[x,f'], T^1_b[x,f]\}$.
There will be no other entries in $T^2_b[x,\cdot]$. 
\item Compute $T^3_b[x,\cdot]$ by Extend-Table on $T^2_b[x,\cdot]$.
\end{enumerate}

The total time for the above three steps is bounded by $O^*(\max\{3^{|B|}, 3^{|B \cap L|}2^{|X \setminus (B \cap L)|}\})$.
Note that $T^3_b[x,f]$ is defined for all $f$ where vertices in $B \cap L$ take on values $\{1,0,*\}$ and vertices in 
$X \setminus (B \cap L)$ take on values $\{0,*\}$.
The value of $T^3_b[x,f]$ will be the minimum $|S|$ over all $S \subseteq V(G_b)$ such that there exists $f' \in c(S)$ with $f'|_X=f$ and $f'|_{V(G_b) \setminus X}$ having everywhere the value 0.
Note the slight difference from the standard definition, namely that even though the coloring $f$ is defined on $X$, the dominators only come from $V(G_b)$, and not from $V(G_x)$.
The table $T^3_c[x,\cdot]$ is computed in a similar way, with the colorings again defined on $X$ but with the dominators now coming from $V(G_c)$.

When computing a Join of these two tables, we want dominators to come from $V(G_b) \cup V(G_c)$.
Because of the monotonicity property that holds for these two tables, we can compute their Join $T^1_{sc}[x,f]$ for any $f$ where vertices in $L$ take on values $\{1,0,*\}$ and vertices in 
$X \setminus L$ take on values $\{0,*\}$, by combining colorings as follows: 

$$T^1_{sc}[x,f]= \min_{f_b, f_c} (T^3_b[x,f_b]+T^3_c[x,f_c])-|f^{-1}(1) \cap B \cap C|$$

where $f_b,f_c$ satisfy:

\begin{itemize}
 \item $f(v)=0$ if and only if $(f_b(v),f_c(v)) \in \{(0,*),(*,0)\}$
 \item $f(v)=*$ if and only if $f_b(v)=f_c(v)=*$
 \item $f(v)=1$ if and only if $v \in B \cap C$ and $f_b(v)=f_c(v)=1$, or $v \in B \setminus C$ and $(f_b(v), f_c(v))=(1,*)$,
 or $v \in C \setminus B$ and $(f_b(v), f_c(v))=(*,1)$.
\end{itemize}

This means that we can apply subset convolution to compute a table $T^1_{sc}[x,f]$ on $3^{|L|}2^{|X \setminus L|}$ entries based on $T^3_b[x,f]$ and $T^3_c[x,f]$. Note that $(B \cap L) \cup (C \cap L)=L$. For this step we follow the description in \cite[Section 11.1.2]{PCbook2015}. 
Fix a set $D \subseteq L$ to be the dominating vertices. Let $F_D$ denote the set of $2^{|X \setminus D|}$ functions $f: X \rightarrow \{1,0,*\}$ such that $f^{-1}(1)=D$, i.e. with vertices in $X \setminus D$ mapping in all possible ways to $\{0,*\}$. For each $D \subseteq L$ we will by subset convolution compute the values of $T^1_{sc}[x, f]$ for all $f \in F_D$. 

We represent every $f \in F_D$ by the set $S=f^{-1}(0)$ and define $b_S: X \rightarrow \{1,0,*\}$ such that $b_S(x)=1$ if $x \in D \cap B$, $b_S(x)=0$ if $x \in S$, $b_S(x)=*$ otherwise. Similarly, define $c_S: X \rightarrow \{1,0,*\}$ such that $c_S(x)=1$ if $x \in D \cap C$, $c_S(x)=0$ if $x \in S$, $c_S(x)=*$ otherwise.
Then, as explained previously, for every $f \in F_D$ we want to compute
$$T^1_{sc}[x,f]= \min_{Q \cup R=f^{-1}(0) \mbox{ and } Q \cap R= \emptyset} (T^3_b[x,b_Q]+T^3_c[x,c_R])-|f^{-1}(1) \cap B \cap C|.$$
Define functions $T_b:2^{X \setminus D} \rightarrow \nat$ such that for every $S \subseteq X \setminus D$ we have $T_b(S)=T^3_b[x,b_S]$. Likewise, define functions $T_c:2^{X \setminus D} \rightarrow \nat$ such that for every $S \subseteq X \setminus D$ we have $T_c(S)=T^3_c[x,c_S]$. Also, define $a_S: X \rightarrow \{1,0,*\}$ such that $a_S(x)=1$ if $x \in D$, $a_S(x)=0$ if $x \in S$, $a_S(x)=*$ otherwise. We then compute for every $S \subseteq X \setminus D$,
$$T^1_{sc}[x,a_S]:=(T_b * T_c)(S)-|f^{-1}(1) \cap B \cap C|$$
where the subset convolution is over the mini-sum semiring. 

\bigskip

(4) In Step (4), by Fast Subset Convolution, Theorem \ref{fss}, we compute
$T^1_{sc}[x,a_S]$, for all $a_S$ defined by all $f \in F_D$, in $O^*(2^{|X
\setminus D|})$ time each. 
For all such subsets $D \subseteq L$ we get the time 
$$\sum_{D \subseteq L} 2^{|X \setminus D|} = 
 \sum_{D \subseteq L} 2^{|X \setminus L|}2^{|L \setminus D|}=  
 2^{|X \setminus L|} \sum_{D \subseteq L} 2^{|L \setminus D|} =
   2^{|X \setminus L |}3^{|L|}.$$

(5) In Step (5), we need to run Extend-Table on $T^1_{sc}[x, \cdot]$ to get the table $T^2_{sc}[x, \cdot]$.
This since the subset convolution was computed for each fixed set of dominators so the monotonicity property of the table may not hold. 
Note that the value of $T^2_{sc}[x, f]$ will be the minimum $|S|$ over all $S \subseteq V(G_b) \cup V(G_c)$ such that there exists $f' \in c(S)$ with $f'|_X=f$ and $f'|_{(V(G_b) \cup V(G_c)) \setminus X}$ having everywhere the value 0.

\medskip

(6) In Step (6), we will for each  $f:A \rightarrow \{1,0,*\}$ compute $f': B \cup A \cup C \rightarrow \{1,0,*\}$ by 
 $$f'(v)= \left\{ \begin{array}{ll}
                   1 & \mbox{if $v \in A \cap L$ and $f(v)=1$}\\
                   0 & \mbox{if $v \in A$ and $f(v)=0$ and $N(v) \cap f^{-1}(1)=\emptyset$}\\ 
                   0 & \mbox{if $v \not \in A$ and $N(v) \cap f^{-1}(1)=\emptyset$}\\
                   * & \mbox{otherwise}\\
                  \end{array} \right. $$
and set $T[a,f]:=T^2_{sc}[x,f']+|f^{-1}(1) \cap (A \setminus L)|$. 

Note that when we iterate over all choices of $f:A \rightarrow \{1,0,*\}$, the
vertices colored 0 (in addition to all vertices of $X \setminus A$) must either
be dominated by the vertices in $f^{-1}(1)$ or by vertices in $X \setminus V_a$.
As we know precisely what vertices of $f^{-1}(0)$ are dominated by $f^{-1}(1)$,
we know the rest must be dominated from vertices of $X \setminus V_a$, and
therefore we look in $T_{sc}[x,f']$ at an index $f'$ which colors the rest of
$f^{-1}(0)$ by 0. We can also observe that it is not important for us whether or
not $f^{-1}(0)$ contains all neighbours of $f^{-1}(1)$, since we are iterating
over all choices of $f$ - also those where $f^{-1}(0)$ contains all neighbours
of $f^{-1}(1)$.

The total runtime becomes $O^*(\max\{3^{|A|}, 3^{|B|}, 3^{|C|}, 3^{|L|} 2^{|(A \cup B \cup C) \setminus L|}\})$, with $L= (A \cap B) \cup (A \cap C) \cup (B \cap C)$ and with constraints $|A|, |B|, |C| \leq k$. This runtime is maximum when $L=\emptyset$, giving a runtime of $O^*(2^{3k})$. We thus have the following theorem.

\begin{theorem}
Given a graph $G$ and branch decomposition over its vertex set of mm-width $k$ we can solve Dominating Set in time $O^*(8^{k})$.
\end{theorem}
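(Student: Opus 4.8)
The plan is to assemble the pieces that the preceding section has laid out into a single dynamic-programming argument over the rooted tree decomposition $(T',\{X_t\}_{t\in V(T')})$ derived from the given branch decomposition of mm-width $k$. First I would recall from Theorem~\ref{thm:newcharac} (and the polynomial-time algorithm implicit in its proof) that from the branch decomposition over $V(G)$ of mm-width $k$ we obtain, in polynomial time, a family of nontrivial subtrees $\{T_u\}_{u\in V(G)}$ of a ternary tree $T$ with at most $k$ subtrees on each edge, and hence the tree decomposition $(T',\{X_t\})$ described above, in which every degree-three node $x$ of $T$ gives rise to four bags $a,b,c,x$ with $|A|,|B|,|C|\le k$ and $X=A\cup B\cup C$. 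The correctness of the leaf initialization and of the Extend-Table subroutine (it exactly realizes the stated monotonicity property $T'[t,f]\le T'[t,f']$ whenever $f$ weakens $f'$) should be verified directly against the definition of $T[t,\cdot]$; both are routine.

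The substantive step is to prove that the six-stage update rule correctly computes $T[a,\cdot]$ from $T[b,\cdot]$ and $T[c,\cdot]$. I would argue invariants stage by stage: $T^1_b[x,\cdot]$ records, for each coloring of $B$, the unique extension to $X$ forced by the dominators already chosen in $G_b$, so its value equals the minimum $|S|$ over $S\subseteq V(G_b)$ consistent with that coloring and dominating $V(G_b)\setminus X$; passing to $T^2_b$ is legitimate because a vertex of $B\setminus L$ has no neighbour outside $V(G_x)$, so recoloring it from $1$ to $0$ loses no future obligation; $T^3_b$ then restores monotonicity via Extend-Table. The crux is Step (4): the displayed combination rule for $f_b,f_c$ exactly characterizes, for each $f$ on $X$ with $f^{-1}(1)\subseteq L$, the pairs of ``half-solutions'' whose union is an $S\subseteq V(G_b)\cup V(G_c)$ realizing $f$, and the correction term $-|f^{-1}(1)\cap B\cap C|$ compensates for vertices of $B\cap C$ that would otherwise be counted twice. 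Since $(B\cap L)\cup(C\cap L)=L$, fixing the dominator set $D\subseteq L$ and disjointly splitting $f^{-1}(0)=Q\cup R$ reduces the ``$f(v)=0$ iff $(f_b(v),f_c(v))\in\{(0,*),(*,0)\}$'' condition to an ordinary disjoint union, so Fast Subset Convolution (Theorem~\ref{fss}) applies for each $D$. Step (5) re-applies Extend-Table because the per-$D$ convolution need not be monotone, and Step (6) reads off $T[a,f]$ for each coloring of $A$ by charging the dominators in $A\setminus L$ separately (they are invisible to $G_b,G_c$) and looking up the index $f'$ of $T^2_{sc}$ that declares the still-undominated vertices of $f^{-1}(0)$ to be dominated from $X\setminus V_a$. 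Finally, the size of a minimum dominating set is $\min_f T[r,f]$ over colorings with $f^{-1}(*)=\emptyset$ at the root.

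For the running time I would observe that every table has at most $\max\{3^{|A|},3^{|B|},3^{|C|},3^{|L|}2^{|X\setminus L|}\}$ entries; Extend-Table costs a factor $|X_t|\le 3k$, which is absorbed by $O^*(\cdot)$; and the displayed computation $\sum_{D\subseteq L}2^{|X\setminus D|}=2^{|X\setminus L|}3^{|L|}$ bounds the total convolution cost, with the $O(M\log M\log\log M)$ factor from Theorem~\ref{fss} polynomial in $n$ since all values are at most $n$. Maximizing $3^{|L|}2^{|X\setminus L|}$ subject to $|A|,|B|,|C|\le k$ and $X=A\cup B\cup C$, the worst case is $L=\emptyset$ and $|X|=3k$, giving $2^{3k}=8^k$ per node and $O^*(8^k)$ overall.

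\textbf{Main obstacle.} I expect the delicate point to be verifying that the combination rule in Step (4) together with the $-|f^{-1}(1)\cap B\cap C|$ correction is simultaneously sound (every produced value corresponds to a genuine $S\subseteq V(G_b)\cup V(G_c)$) and complete (every optimal partial solution is found), while keeping the bookkeeping of which vertices are ``already dominated'', ``must still be dominated from above'', or ``free'' consistent across the three colors and the shift from tables indexed on $B$, $C$ to tables indexed on $X$ — in particular checking that no domination obligation is silently dropped when a color is coarsened in stages (2), (3), (5) and that the subtree/separator structure guarantees a vertex of $B\setminus L$ really has all its neighbours inside $V(G_x)$.
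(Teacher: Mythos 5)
Your proposal follows the paper's own argument essentially verbatim: the same tree decomposition derived from Theorem~\ref{thm:newcharac}, the same six-stage update rule with the $-|f^{-1}(1)\cap B\cap C|$ correction and per-$D$ subset convolution, and the same runtime bound $\sum_{D\subseteq L}2^{|X\setminus D|}=2^{|X\setminus L|}3^{|L|}\le 8^k$. The invariants you state for each stage and the obstacle you flag (soundness/completeness of the Step (4) combination rule and the fact that vertices of $B\setminus L$ have no neighbours outside $V(G_x)$) are exactly the points the paper's exposition addresses, so this is correct and not a different route.
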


\section{Discussion}

We have shown that the graph parameter mm-width will for some graphs be better than treewidth for solving Minimum Dominating Set.
The improvement holds whenever $\tw(G) >  1.549 \times \mmw(G)$, if given only the graph as input.
In Figure \ref{twmmw} we list some examples of small graphs having treewidth at least twice as big as mm-width. It could be interesting to explore the relation between treewidth and mm-width for various well-known classes of graphs.
The given algorithmic technique, using fast subset convolution, should extend to any graph problem expressible as a maximization or minimization over $(\sigma, \rho)$-sets, using the techniques introduced for treewidth in \cite{RBR2009}.

    \begin{figure}[ht!]
      \center
      \includegraphics[scale=0.5]{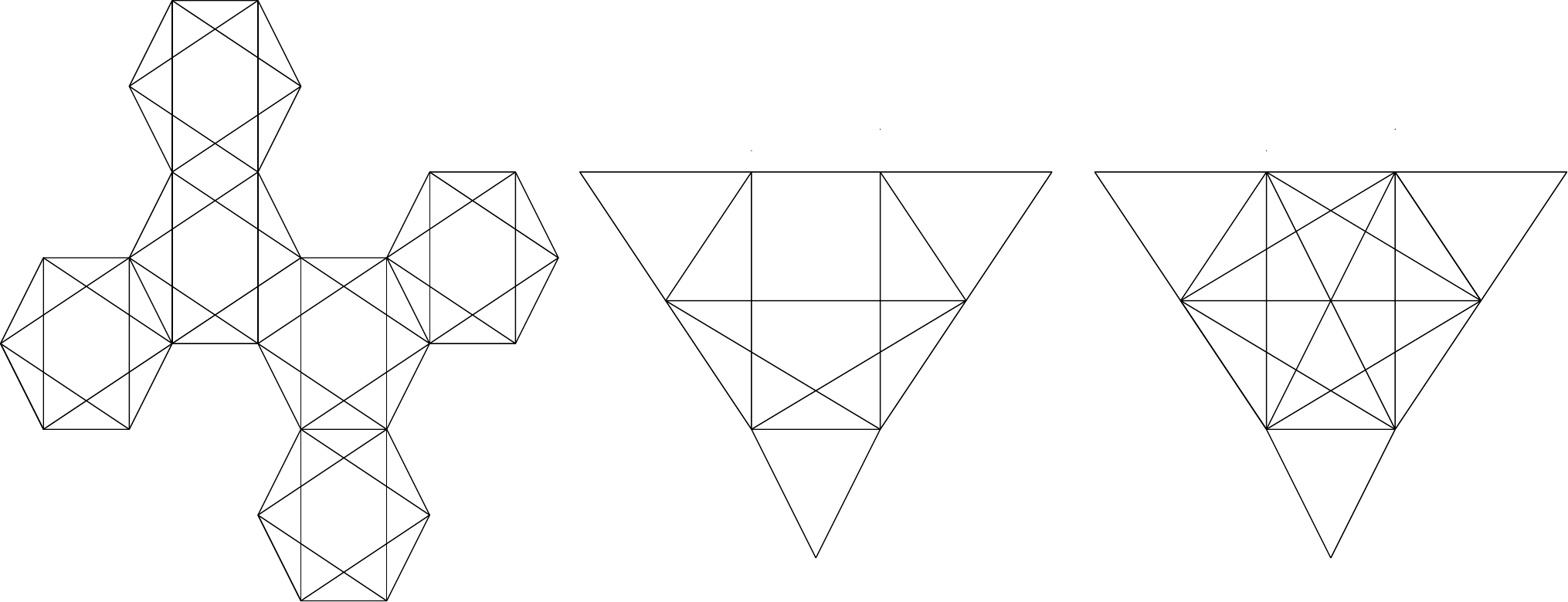}
      \caption{Three graphs of mm-width 2. Left and middle have treewidth 4, and right has treewidth 5. }
      \label{twmmw}
    \end{figure}

We may also compare with branchwidth. 
Let $\omega$ be the \emph{exponent of matrix multiplication}, which is less than $2.3728639$~\cite{GF2014}.
In 2010, Bodlaender, van Leeuwen, van Rooij, and Vatshelle~\cite{BLRV2010} gave an $O^*(3^{\frac{\omega}{2}k})$ time algorithm 
solving Minimum Dominating Set if an input graph is given with its branch decomposition of width $k$.
This means that given decompositions of $\bw(G)$ and $\mmw(G)$
our algorithm based on mm-width is faster than the algorithm in~\cite{BLRV2010} 
whenever $\bw(G)>\log_3{8}\cdot \frac{2}{\omega} \cdot \mmw(G)>\frac{2\log_3{8}}{2.3728639} \cdot \mmw(G)>1.6 \mmw(G)$.

Taking the subtrees of tree representation for treewidth, branchwidth and maximum matching width mentioned in the Introduction as input, our algorithm for dominating set can be seen as a generic one that works for any of treewidth, branchwidth or maximum matching width of the given representation, and in case of both treewidth and mm-width it will give the best runtime known. 

We gave an alternative definition of mm-width using subtrees of a tree, similar
to alternative definitions of treewidth and branchwidth. We saw that in the
subtrees of a tree representation
treewidth focuses on nodes, branchwidth focuses on edges, and mm-width combines
them both. There is also a fourth way of defining a parameter through these
intersections of subtrees representation; where subtrees $T_u$ and $T_v$ must
share an edge if $uv \in E(G)$ (similar to branchwidth) and the width is defined
by the maximum number of subtrees sharing a single vertex (similar to treewidth).
This parameter will be an upper bound on all the other three parameters, but
might it be that the structure this parameter highlights can be used to
improve the runtime of Dominating Set beyond $O^*(3^{\tw(G)})$ for even more
cases than those shown using mm-width and branchwidth?

\bibliographystyle{abbrv}
\bibliography{bib-width}

\begin{thebibliography}{10}

\bibitem{AE2010}
E.~Amir.
\newblock Approximation algorithms for treewidth.
\newblock {\em Algorithmica}, 56(4):448--479, 2010.

\bibitem{BHKK2007}
A.~Bj{\"o}rklund, T.~Husfeldt, P.~Kaski, and M.~Koivisto.
\newblock Fourier meets {M}\"obius: fast subset convolution.
\newblock In {\em S{TOC}'07---{P}roceedings of the 39th {A}nnual {ACM}
  {S}ymposium on {T}heory of {C}omputing}, pages 67--74. ACM, New York, 2007.

\bibitem{bodlaender2013c}
H.~L. Bodlaender, P.~G. Drange, M.~S. Dregi, F.~V. Fomin, D.~Lokshtanov, and
  M.~Pilipczuk.
\newblock An $o(c^{k} n)$ 5-approximation algorithm for treewidth.
\newblock In {\em Foundations of Computer Science (FOCS), 2013 IEEE 54th Annual
  Symposium on}, pages 499--508. IEEE, 2013.

\bibitem{BK1996}
H.~L. Bodlaender and T.~Kloks.
\newblock Efficient and constructive algorithms for the pathwidth and treewidth
  of graphs.
\newblock {\em J. Algorithms}, 21(2):358--402, 1996.

\bibitem{BLRV2010}
H.~L. Bodlaender, E.~J. van Leeuwen, J.~M.~M. van Rooij, and M.~Vatshelle.
\newblock Faster algorithms on branch and clique decompositions.
\newblock In {\em Mathematical foundations of computer science 2010}, volume
  6281 of {\em Lecture Notes in Comput. Sci.}, pages 174--185. Springer,
  Berlin, 2010.

\bibitem{PCbook2015}
M.~Cygan, F.~V. Fomin, {\L}.~Kowalik, D.~Lokshtanov, D.~Marx, M.~Pilipczuk,
  M.~Pilipczuk, and S.~Saurabh.
\newblock {\em Parameterized Algorithms}.
\newblock Springer International Publishing, New York, 2016.

\bibitem{Diestel2010}
R.~Diestel.
\newblock {\em Graph theory}, volume 173 of {\em Graduate Texts in
  Mathematics}.
\newblock Springer, Heidelberg, fourth edition, 2010.

\bibitem{Gavril1974}
F.~Gavril.
\newblock The intersection graphs of subtrees in trees are exactly the chordal
  graphs.
\newblock {\em J. Combinatorial Theory Ser. B}, 16:47--56, 1974.

\bibitem{JST2015}
J.~Jeong, S.~H. S{\ae}ther, and J.~A. Telle.
\newblock An {FPT} algorithm computing a decomposition of optimal mm-width.
\newblock in preparation, 2015.

\bibitem{Koenig1931}
D.~K{\"o}nig.
\newblock Gr{\'a}fok {\'e}s m{\'a}trixok.
\newblock {\em Matematikai {\'e}s Fizikai Lapok}, 38:116--119, 1931.

\bibitem{GF2014}
F.~Le~Gall.
\newblock Powers of tensors and fast matrix multiplication.
\newblock In {\em I{SSAC} 2014---{P}roceedings of the 39th {I}nternational
  {S}ymposium on {S}ymbolic and {A}lgebraic {C}omputation}, pages 296--303.
  ACM, New York, 2014.

\bibitem{DDS2011}
D.~Lokshtanov, D.~Marx, and S.~Saurabh.
\newblock Known algorithms on graphs of bounded treewidth are probably optimal.
\newblock In {\em Proceedings of the {T}wenty-{S}econd {A}nnual {ACM}-{SIAM}
  {S}ymposium on {D}iscrete {A}lgorithms}, pages 777--789. SIAM, Philadelphia,
  PA, 2011.

\bibitem{OS2004}
S.~Oum and P.~Seymour.
\newblock Approximating clique-width and branch-width.
\newblock {\em J. Combin. Theory Ser. B}, 96(4):514--528, 2006.

\bibitem{PT2009}
C.~Paul and J.~A. Telle.
\newblock Edge-maximal graphs of branchwidth {$k$}: the {$k$}-branches.
\newblock {\em Discrete Math.}, 309(6):1467--1475, 2009.

\bibitem{RS1991}
N.~Robertson and P.~D. Seymour.
\newblock Graph minors. {X}. {O}bstructions to tree-decomposition.
\newblock {\em J. Combin. Theory Ser. B}, 52(2):153--190, 1991.

\bibitem{ST2014}
S.~H. S{\ae}ther and J.~A. Telle.
\newblock Between treewidth and clique-width.
\newblock In {\em Graph-theoretic concepts in computer science}, volume 8747 of
  {\em Lecture Notes in Comput. Sci.}, pages 396--407. Springer, Cham, 2014.

\bibitem{RBR2009}
J.~M.~M. van Rooij, H.~L. Bodlaender, and P.~Rossmanith.
\newblock Dynamic programming on tree decompositions using generalised fast
  subset convolution.
\newblock In {\em Algorithms---{ESA} 2009}, volume 5757 of {\em Lecture Notes
  in Comput. Sci.}, pages 566--577. Springer, Berlin, 2009.

\bibitem{Vatshelle2012}
M.~Vatshelle.
\newblock {\em New Width Parameters of Graphs}.
\newblock PhD thesis, The University of Bergen, 2012.

\end{thebibliography}

\end{document}